\newcommand\B{{\bf B}}
\newcommand\e{{\bf e}}
\newcommand\w{{\bf w}}
\newcommand\un{{\bf u}}
\newcommand\n{{\bf n}}
\newcommand\x{{\bf x}}
\newcommand\y{{\bf y}}
\newcommand\p{{\bf p}}
\newcommand\q{{\bf q}}
\newcommand\Cn{{\bf C}}
\newcommand\m{{\bf m}}
\title{Inertial forces in the Navier-Stokes equation}
\author{Philipp Lohrmann  \thanks{Supported in part by the European Research Counsil under FP7 \textquotedblleft New connections between dynamical systems and Hamiltonian PDE with small divisor phenomena"}}
\theoremstyle{plain}
\newtheorem{pr}{Proposition}[section]
\newtheorem{Lm}[pr]{Lemma}
\newtheorem{thm}[pr]{Theorem}
\newtheorem{cor}[pr]{Corollary}
\theoremstyle{definition}
\theoremstyle{remark}
\newtheorem{rem}{Remark}[section]
\numberwithin{equation}{section}
\begin{document}

\maketitle

\begin{abstract}
We estimate the inertial forces in the 3d Euler/Navier-Stokes equation in function of $|\un|_\infty$ and $|\nabla \un |_\infty$ and provide an application to the well-posed problem.  
\end{abstract}

\section{Introduction}

{\it Background.} The purpose of this paper is to present a new method to analyze the inertial force field in the the 3d Euler/Navier-Stokes equation. A better understanding of the inertial force field is desirable in many problems in fluid mechanics, and essential for further progress in the well-posed problem for the Navier-Stokes equation.
Nowadays the question whether a solution can develop a finite time singularity from smooth initial data remains open, although 
many heuristic facts indicate that the solutions do not blow up. 
T. Hou and Z. Li, in \cite{HuLi}, emphasized that the standard approaches (see for instance \cite{Kato}) to the Cauchy problem don't take into account the whole structure of the non-linearity: the efforts are focused on the use of the diffusion term in order to control the vorticity stretching term, considered as the main source of difficulties, whereas the convection term is \textquotedblleft neglected". The reason for this is that energy estimates are used essentially, and the convection term does not contribute to the energy norm of the velocity field, or the $L^p$-norms of the vorticity field. Hu and Li then dropped the convection term from the Navier Stokes equation and proved that counterparts of many results for the Navier-Stokes equation hold in this toy model (on axial symmetric domains). On the other hand, they provided numerical evidence that in this toy model, finite times singularities may develop from smooth initial data. Their conclusion was that in the full Navier-Stokes equation, the convection term plays an essential stabilizing role. 
A future use of the whole structure of the nonlinearity requires a better understanding of the inertial force field $\nabla P$. 

\vspace{0.2cm}

{\it Results.} In this paper we consider the 3d Euler/Navier-Stokes equation
\begin{equation} \label{ns}
\left\{\begin{array}{l} \partial_t \un + \un.\nabla \un + \nabla P = \nu \Delta \un, \qquad \nu=0,1 \\  
                                                    \text{div } \un =0 \end{array} \right.
\end{equation}
without external forces. As our purpose is to present a new method to analyze the inertial force field, we work in the simplest setting possible, and our assuptions are not the weakest possible.

\begin{thm} \label{thm2}
There exists a constant $\beta >0$ so that for any solution $(\un,P) $ of (\ref{ns}) one has at any time at which $\un$ and $\nabla \un$ belong to $L^\infty ( \mathbb R
^3) \cap C^\infty( \mathbb R
^3)$ 
\begin{equation} \label{relthm2}
|\nabla P |_\infty < \beta |\nabla \un|_\infty |\un |_\infty.
\end{equation} 
\end{thm}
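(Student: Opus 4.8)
The plan is to reduce the estimate to a pointwise bound on a Calder\'on--Zygmund singular integral, and then to beat the classical failure of such operators on $L^\infty$ by exploiting incompressibility together with a localization around the point at which $\nabla P$ is evaluated. First I would take the divergence of the momentum equation in (\ref{ns}). Since $\text{div}\,\un=0$, both $\text{div}(\partial_t\un)=\partial_t\,\text{div}\,\un$ and $\text{div}(\nu\Delta\un)=\nu\Delta\,\text{div}\,\un$ vanish, so the time derivative and the viscous term disappear and one is left with the pressure Poisson equation
\[
\Delta P=-\partial_i\partial_j(u_iu_j)=-\partial_iu_j\,\partial_ju_i ;
\]
in particular the estimate is identical in the Euler and Navier--Stokes cases. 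Inverting with the Newtonian potential $1/(4\pi|\z|)$ and transferring derivatives onto the kernel yields a representation
\[
\partial_kP(\x)=\frac1{4\pi}\int_{\mathbb R^3}K_{ki}(\x-\y)\,u_j\partial_ju_i(\y)\,d\y,\qquad K_{ki}(\z)=-\frac{\delta_{ik}}{|\z|^3}+3\frac{z_iz_k}{|\z|^5},
\]
where $K_{ki}$ is homogeneous of degree $-3$ with vanishing spherical mean, i.e.\ a genuine Calder\'on--Zygmund kernel. The first point requiring care is to justify this formula for the physical pressure when only $\un,\nabla\un\in L^\infty$ are assumed (the integral is merely conditionally convergent); I would fix the pressure as the gradient part of the convection term and make the formula rigorous through the splitting below.

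Next I would localize. To bound $|\nabla P(\x_0)|$ I may take $\x_0=0$ and replace $\un$ by $\w=\un-\un(0)$. Since $\un(0)$ is constant and $\w$ is again divergence free, a short computation shows $\partial_i\partial_j(u_iu_j)=\partial_i\partial_j(w_iw_j)$: the constant term and the two terms linear in $\w$ are annihilated by $\partial_i\partial_j$ using $\text{div}\,\w=0$. Hence the source may be written as $w_j\partial_jw_i=\partial_j(w_iw_j)$, and the decisive gain is that $\w$ vanishes at the base point, with $|\w(\z)|\le\min\bigl(2|\un|_\infty,\ |\nabla\un|_\infty\,|\z|\bigr)$ while $\nabla\w=\nabla\un$.

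Then I would split the integral at a radius $R$ to be optimized. On $|\z|<R$ the integrand is absolutely convergent: using $|\w|\le|\nabla\un|_\infty|\z|$ and $|K_{ki}|\le C|\z|^{-3}$ gives a contribution bounded by $C|\nabla\un|_\infty^2\int_{|\z|<R}|\z|^{-2}\,d\z\le C'|\nabla\un|_\infty^2R$. On $|\z|>R$ the kernel is not absolutely integrable, so I would integrate by parts, moving $\partial_j$ from $\partial_j(w_iw_j)$ onto the kernel; this produces a bulk term with kernel $|\partial_jK_{ki}|\le C|\z|^{-4}$ paired with $|w_iw_j|\le4|\un|_\infty^2$, bounded by $C|\un|_\infty^2\int_{|\z|>R}|\z|^{-4}\,d\z\le C'|\un|_\infty^2R^{-1}$, together with a boundary term on $|\z|=R$ of the same order (the contribution at infinity vanishing). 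Choosing $R=|\un|_\infty/|\nabla\un|_\infty$ balances the two regimes and gives $|\nabla P(0)|\le\beta|\nabla\un|_\infty|\un|_\infty$, as claimed.

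The main obstacle is precisely the far field: a Calder\'on--Zygmund operator is unbounded on $L^\infty$ (its output lies only in $BMO$), so feeding the bounded source $u_j\partial_ju_i$ into the kernel directly cannot yield an $L^\infty$ bound, and a crude estimate would in any case only produce $|\nabla\un|_\infty^2$. What rescues the argument, and what accounts for the sharper product $|\nabla\un|_\infty|\un|_\infty$, is the combination of two structural facts: the divergence-free condition, which puts the source in divergence form and permits the outer integration by parts that gains the extra decay $|\z|^{-4}$, and the subtraction of $\un(0)$, which supplies the smallness $|\w(\z)|\lesssim|\nabla\un|_\infty|\z|$ needed to tame the inner region. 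A secondary technical point, to be dispatched alongside the splitting, is the rigorous definition of the pressure from merely $L^\infty$ velocity data.
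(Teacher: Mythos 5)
Your proposal is correct, but it follows a genuinely different route from the paper. The paper deliberately avoids the Calder\'on--Zygmund representation (it says so explicitly) and instead argues ``locally'': it decomposes each half-space into dyadic blocks $\B_n$, $\Cn_n$, estimates the total charge $C(\B_n)$, $C(\Cn_n)$ by a rectangle lemma (Lemma \ref{lemma1}, pure integration by parts plus the divergence-free condition, measuring the flux of $\un.\nabla\un$ in terms of the oscillation $w(R)^2$), rules out strong dipoles inside each block by integrating over level sets of the Coulomb field (Lemma \ref{cor2lemma1}, Lemma \ref{lmclose}), and then sums the block contributions dyadically. Your argument instead inverts the Laplacian explicitly, subtracts $\un(0)$, and splits the kernel integral at the radius $R=|\un|_\infty/|\nabla\un|_\infty$, using $|\w(\z)|\le\min\bigl(2|\un|_\infty,\,C|\nabla\un|_\infty|\z|\bigr)$ in the near field and an integration by parts (gaining $|\z|^{-4}$ decay) in the far field. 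Despite the different machinery, the two proofs hinge on the same two structural facts: incompressibility (your divergence-form rewriting and cancellation of cross terms corresponds to the paper's Lemma \ref{lemma1}; your Galilean subtraction of $\un(0)$ corresponds to the paper's moving frame in the proof of Corollary \ref{estcharge}), and balancing at the same critical scale (your $R$ is exactly the paper's crossover index $n_0$ in (\ref{defn0}), where the oscillation bound switches from $|\nabla\un|_\infty\,2^n$ to $|\un|_\infty$). What each approach buys: yours is shorter and uses standard potential theory; the paper's is more elementary and local, which is what makes the counterparts on $\mathbb T^3$ and the localized refinements mentioned in its Remark 1.1 immediate. One caveat on your side, which you correctly flag but should not understate: for merely bounded, non-decaying $\un$ the p.v.\ representation of $\nabla P$ and its normalization (Leray projection / vanishing at infinity) must be justified by an approximation or cutoff argument before the near/far splitting can be applied; the splitting itself then renders each piece absolutely convergent, so this is a technical rather than structural gap --- and the paper's own formulas (\ref{chaden})--(\ref{PoisP}) face the same issue.
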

We do not prove Theorem \ref{thm2} by the usual harmonic analysis methods to estimate singular integrals. Instead we give a {\it local} argument, i.e., we use an analogy with electrostatics, and study the spacial distribution 
of the \textquotedblleft charges" generating the inertial force field. To this end we provide some surprising direct estimates, relying  essentially on the divergence-free condition.

\begin{rem}
As we use local arguments, couterparts of Theorem \ref{thm2} hold for other underlying spaces, for example $\mathbb T^3$.
Our method can also be used for various other problems, for example,
estimating $\nabla P$ in space regions where the velocity resp velocity gradient is small with respect $|\un|_\infty$ resp $|\nabla \un|_\infty$.
\end{rem}

We also want to present some application. Here we only consider the case of the Navier-Stokes equation, i.e., we assume $\nu=1$, and consider initial data $\un^0$ in $L^3(\mathbb R^3) \cap L^\infty(\mathbb R^3)$. Then (see \cite{ESS}) $\un^0$ generates a unique solution $\un$ in $L^{3,\infty}(\mathbb R^3 \times [0,t_b])$, smooth in space-time for any $\x \in \mathbb R^3$ and
any $0< t < t_b(\un^0)$, where $t_b \equiv t_b(\un^0)$ denotes its (eventual) blow-up time. If she does not blow up, then we set $t_b(\un^0):= +\infty$.

\begin{pr} \label{thm3}
There exists a constant $\beta_2 >0$ so that for any initial data $\un^0$ as above, there exists $0\leq t_0\equiv t_0(\un^0)< t_b(\un_0)$, so that for any $t_0 \leq t < t_b(\un^0)$  
\begin{equation*}
|\nabla \un(\cdot,t)|_\infty < \beta_2 \text{max}_{0\leq s\leq t}|\un(\cdot,s)|_\infty^2. 
\end{equation*}
Moreover, for any $0 \leq t \leq t_0$ one has
$|\un(\cdot, t)|_\infty \leq \frac{9}{8}|\un^0|_\infty$.
\end{pr}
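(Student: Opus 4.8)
The plan is to feed the pressure bound of Theorem~\ref{thm2} into the Duhamel (mild-solution) representation and then run a scale-invariant continuation argument. Writing (\ref{ns}) with $\nu=1$ as $\partial_t\un-\Delta\un=F$, where $F:=-\un.\nabla\un-\nabla P$, Theorem~\ref{thm2} yields the pointwise nonlinear bound $|F|_\infty\le(1+\beta)|\un|_\infty|\nabla\un|_\infty$, so the \emph{entire} nonlinearity is linear in $|\nabla\un|_\infty$. Set $M(t):=\max_{0\le s\le t}|\un(\cdot,s)|_\infty$ and $N(t):=|\nabla\un(\cdot,t)|_\infty$. Using the heat-kernel estimates $|e^{\tau\Delta}g|_\infty\le|g|_\infty$ and $|\nabla e^{\tau\Delta}g|_\infty\le C\tau^{-1/2}|g|_\infty$ together with Duhamel's formula (restarted from any time in the existence interval), one obtains the two controlling inequalities
$$M(t)\le|\un^0|_\infty+(1+\beta)\int_0^t M(s)N(s)\,ds,$$
$$N(t)\le \frac{C|\un^0|_\infty}{\sqrt t}+C(1+\beta)\int_0^t\frac{M(s)N(s)}{\sqrt{t-s}}\,ds.$$
These are the only analytic ingredients; everything else is a bootstrap.

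For the short-time phase I would set $t_0:=c\,|\un^0|_\infty^{-2}$ for a small absolute constant $c$ and run a continuity argument on $[0,t_0]$ for the pair $M(t)$ and the weighted gradient $\sqrt t\,N(t)$. Inserting the ansatz $M(s)\le\frac98|\un^0|_\infty$, $\sqrt s\,N(s)\le K|\un^0|_\infty$ into the inequalities above and using the exact Beta-integral $\int_0^t(t-s)^{-1/2}s^{-1/2}\,ds=\pi$ (and $\int_0^t s^{-1/2}\,ds=2\sqrt t$), every feedback term carries a factor $\sqrt t\le\sqrt c\,|\un^0|_\infty^{-1}$, hence an overall factor $\sqrt c$. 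Choosing $c$ small closes the bootstrap with $K=2C$, which simultaneously gives $M(t)\le\frac98|\un^0|_\infty$ on $[0,t_0]$ (the ``Moreover'' claim) and, at the endpoint, $N(t_0)\le K|\un^0|_\infty/\sqrt{t_0}=2Cc^{-1/2}|\un^0|_\infty^2\le\beta_2 M(t_0)^2$, since $M(t_0)\ge|\un^0|_\infty$.

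To propagate the scale-invariant bound $N(t)<\beta_2M(t)^2$ to all $t\in[t_0,t_b)$ I would argue by continuation: assuming $N(s)\le\beta_2M(s)^2$ for $s<t$, restart Duhamel at $s_0:=t-h$ with the \emph{adapted} window $h:=c'M(t)^{-2}$, which fits inside $(0,t)$ when $c'\le c$ because $M(t)\ge|\un^0|_\infty$ and $t\ge t_0$. Since $M$ is non-decreasing, $M(s)\le M(t)$ on $[s_0,t]$, so the forced term contributes $Ch^{-1/2}|\un(s_0)|_\infty\le C(c')^{-1/2}M(t)^2$, while the Duhamel integral contributes at most $C(1+\beta)\beta_2M(t)^3\int_{s_0}^t(t-s)^{-1/2}\,ds=2C(1+\beta)\beta_2\sqrt{c'}\,M(t)^2$. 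Taking $c'$ small enough that $2C(1+\beta)\sqrt{c'}\le\frac12$ and then $\beta_2:=2C(c')^{-1/2}$ closes the bootstrap, and a standard open/closed argument on $[t_0,t_b)$ finishes the proof.

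The main obstacle is the $t^{-1/2}$ singularity of $N$ near $t=0$: the estimate $N\lesssim M^2$ cannot hold for small $t$ (where $N\sim|\un^0|_\infty/\sqrt t$), which is exactly what forces the two-phase structure and the scaling-dictated choice $t_0\sim|\un^0|_\infty^{-2}$. The delicate point is to make the restart window $h\sim M(t)^{-2}$ honour the scaling $\un\mapsto\lambda\un(\lambda\cdot,\lambda^2\cdot)$ precisely, so that the non-decreasing maximal norm $M(t)$, rather than the instantaneous $|\un(\cdot,t)|_\infty$, can be used to absorb the feedback; keeping the window above $t_0$ for $t$ close to $t_0$ needs a minor separate treatment (for instance, restarting from $0$ while $t\le 2t_0$).
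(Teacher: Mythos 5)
Your proposal is sound, and it reaches Proposition~\ref{thm3} by a genuinely different organization than the paper, although both proofs run on the same three ingredients: the pressure bound of Theorem~\ref{thm2} (so that the whole forcing $\q=\un.\nabla\un+\nabla P$ is controlled by $|\un|_\infty|\nabla\un|_\infty$), the heat-kernel smoothing estimate $|\nabla e^{\tau\Delta}g|_\infty\leq C\tau^{-1/2}|g|_\infty$, and Duhamel's formula restarted at well-chosen times. Where you run a continuous Fujita--Kato-style bootstrap on the pair $\bigl(M(t),\sqrt{t}\,N(t)\bigr)$ with Beta-integral bookkeeping and then propagate the scale-invariant bound via an adapted restart window $h\sim M(t)^{-2}$, the paper discretizes the same mechanism into dyadic gradient scales: Lemma~\ref{lm43} shows that if $|\nabla\un(\cdot,t_0)|_\infty< 2^{n}\kappa b$ (with $|\un|_\infty<b<2^n$) then after a time $\sim\tau 4^{-n}$ the gradient level drops to $2^{n-1}\kappa b$, and Lemma~\ref{cadregeneral} chains these windows across scales --- a geometric series in time of total length $\sim 4^{-n_0}\sim|\un(\cdot,t_0)|_\infty^{-2}$, during which $|\un|_\infty$ grows by at most the factor $9/8$ --- after which Proposition~\ref{thm3} is obtained by iterating Lemma~\ref{cadregeneral} forward in time, resetting the scale $n_0$ each time $\max_{0\leq s\leq t}|\un(\cdot,s)|_\infty$ crosses a threshold $\theta 2^{n_0}$. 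The scaling of your $t_0=c\,|\un^0|_\infty^{-2}$ and of your restart window matches the paper's $\frac{\tau}{6}4^{-n_0+1}$ and $\tau 4^{-n_0}$ exactly, and both arguments lean on the monotone quantity $M(t)$ for the same reason (it absorbs the feedback term without any monotonicity of the instantaneous norm). What your route buys is brevity and standard machinery: the weighted norm $\sqrt t\,N(t)$ replaces the paper's scale-by-scale bookkeeping (the times $t_\ast^{(n)}$, the levels $M_n$, the thresholds $T^{(n_0)},T^{(n_0+1)},\dots$). What the paper's route buys is a more elementary, quantitative picture of how an arbitrarily large gradient relaxes level by level, with no weighted-in-time function spaces. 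Three routine points you should still write out: (a) the open/closed continuation needs strict improvement, so leave a factor-of-two slack in the constants (conclude, say, $N(t)\leq\frac34\beta_2 M(t)^2$ rather than $\leq\beta_2 M(t)^2$); (b) your $t_0$ must satisfy $t_0<t_b$, which follows from the phase-one bound $M\leq\frac98|\un^0|_\infty$ on $[0,\min(t_0,t_b))$ combined with the blow-up criterion $|\un(\cdot,t)|_\infty\to\infty$ as $t\to t_b$ (the paper invokes the same criterion from \cite{ESS}); and (c) the window-dipping issue for $t\in[t_0,2t_0]$, which you already flag, is indeed settled by running the phase-one bootstrap on $[0,2t_0]$ instead of $[0,t_0]$.
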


Finally, Propositon \ref{thm3} admits as an immediate corollary a simple {\it nonlinear} condition on the initial data, guaranteeing existence of
a classical solution for all time. As we use energy conservation here, we again do not use the whole structure of the nonlinearity. Moreover, any initial data satisfying our condition is small in $L^3$, thus this corollary is weaker than several results obtained with the classical approaches.
\begin{cor}\label{intro} 
There exists a constant $\alpha >0$ so that any initial data  $\un^0$ as above, with in addition $|\un^0|_2 < \infty$, and with
\begin{equation}  \label{condition}
|\un^0 |_\infty  < \alpha(|\un^0 |_2)^{-2}
\end{equation}
generates a global in time smooth solution of (\ref{ns}) (i.e. in $C^\infty (\mathbb R^3 \times (0, \infty)$). 
\end{cor}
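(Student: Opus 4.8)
The plan is to combine Proposition \ref{thm3} with the energy inequality and one elementary, scale-invariant interpolation inequality, then to close the estimate by a continuity (bootstrap) argument and finally to invoke the $L^3$ blow-up criterion of \cite{ESS}. Throughout I write $N(t)=|\un(\cdot,t)|_\infty$, $M(t)=\max_{0\le s\le t}|\un(\cdot,s)|_\infty$ and $E(t)=|\un(\cdot,t)|_2$. First I note that multiplying (\ref{ns}) by $\un$ and integrating gives the energy inequality $E(t)\le E_0:=|\un^0|_2$ for all $t\in[0,t_b)$, which is where the hypothesis $|\un^0|_2<\infty$ enters. I also observe that $|\un|_\infty\,|\un|_2^2$ is invariant under the Navier--Stokes scaling $\un\mapsto\lambda\,\un(\lambda\,\cdot,\lambda^2\cdot)$, so the smallness condition (\ref{condition}) is scale critical, consistent with the use of \cite{ESS}.

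The first key step is the interpolation inequality
\[
N(t)^5\le C\,E(t)^2\,|\nabla\un(\cdot,t)|_\infty^3,
\]
valid for any smooth $L^2\cap L^\infty$ field with $C$ absolute. I would prove it by the elementary \emph{ball argument}: at a point $x_0$ where $|\un|$ is nearly maximal, the gradient bound forces $|\un|\ge N/2$ on the ball $B(x_0,N/(2|\nabla\un|_\infty))$, and integrating $|\un|^2\ge N^2/4$ over this ball bounds $E^2$ below by $c\,N^5/|\nabla\un|_\infty^3$. Rearranged, this is a \emph{lower} bound on $|\nabla\un|_\infty$. Now fix a time $t\ge t_0$ at which the running maximum is \emph{active}, i.e. $N(t)=M(t)$; Proposition \ref{thm3} supplies the opposite, \emph{upper} bound $|\nabla\un(\cdot,t)|_\infty<\beta_2 M(t)^2=\beta_2 N(t)^2$. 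Combining the two bounds on $|\nabla\un(\cdot,t)|_\infty$, using $E(t)\le E_0$, and cancelling powers of $N(t)$ yields
\[
M(t)=N(t)>c_0\,E(t)^{-2}\ge c_0\,E_0^{-2}=:m^\ast,\qquad c_0=1/(C\beta_2^3).
\]
Thus at \emph{every} active-maximum time past $t_0$ the running maximum already exceeds the fixed threshold $m^\ast$.

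The second key step turns this dichotomy into a uniform bound by continuity. I choose $\alpha<\tfrac{8}{9}c_0$; then the ``moreover'' part of Proposition \ref{thm3} gives $M(t_0)\le\tfrac98|\un^0|_\infty<\tfrac98\alpha E_0^{-2}<m^\ast$. Since the ESS solution is smooth on $(0,t_b)$, the functions $N$ and $M$ are continuous, $M$ is non-decreasing, and $M$ can increase only at active-maximum times. If $M$ ever reached $m^\ast$, let $T^\ast\in(t_0,t_b)$ be the first such time; then $M(T^\ast)=m^\ast$ with the maximum active at $T^\ast$, so the previous step forces $M(T^\ast)>m^\ast$, a contradiction. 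Hence $|\un(\cdot,t)|_\infty\le m^\ast=c_0 E_0^{-2}$ uniformly on $[0,t_b)$.

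Finally, the interpolation $|\un(\cdot,t)|_3\le|\un(\cdot,t)|_2^{2/3}|\un(\cdot,t)|_\infty^{1/3}\le E_0^{2/3}(m^\ast)^{1/3}$ shows the critical norm stays bounded up to $t_b$, so by the criterion of \cite{ESS} the solution cannot blow up, giving $t_b=+\infty$ and a global smooth solution. I expect the main obstacle to be the bookkeeping of the continuity argument: one must argue carefully that $M$ is continuous and increases only at active-maximum times (so that the threshold crossing occurs at an active time, where the lower bound applies) and verify that the supremum defining $M(T^\ast)$ is attained, which uses the spatial decay of the smooth $L^2\cap L^\infty$ solution. The genuinely substantive input is Proposition \ref{thm3}; given it, energy decay, the ball-argument interpolation, and the $L^3$ criterion fit together with only routine care, which is why this is an immediate corollary.
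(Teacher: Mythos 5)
Your proposal is correct and takes essentially the same approach as the paper: your interpolation inequality $N^5\le C\,E^2\,|\nabla \un|_\infty^3$ is precisely the paper's ball argument (a point with $|\un|=2M$ together with the gradient bound from Proposition \ref{thm3} forces $|\un|\ge M$ on a ball of radius $\sim M/|\nabla\un|_\infty$, hence kinetic energy $\gtrsim 1/M$, which the energy inequality forbids for small data), and your first-crossing continuity argument is the same device as the paper's choice of $t_\ast:=\min\{t>0\,|\,|\un(\cdot,t)|_\infty=2M\}$. The only cosmetic difference is the conclusion, where you pass through boundedness of the $L^3$ norm and \cite{ESS}, while the paper uses directly that $|\un(\cdot,t)|_\infty\to\infty$ as $t$ approaches a finite blow-up time.
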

To prove Corollary \ref{intro},
we use the estimate of the inertial forces to show that under the assumptions we made, the $L^\infty$-norm of the velocity field remains bounded as long as it remains smooth in the space variable. The global in time smoothness is then from the fact that when the time approaches the (eventual) blow-up
time ($\equiv$ smallest time for which the solution is not smooth in the space variable), then the $L^\infty$-norm of the solution tends to infinity.


\vspace{0.2cm}

{\it Notations.}
Small greek letters denote constants, i.e. real numbers which doesn't depend on the initial data $\un^0$, or on other variable quantities. Elements of $\mathbb R^3$  are denoted by small bold latin caracters. 
For $\x=(x_1,x_2,x_3) \in \mathbb R^3$, we denote by $|\x|:=\sqrt{x_1^2+x_2^2+x_3^2}$ the Euclidean norm of $\x$.
We set $|\un|_\infty (:=\|\un\|_{L^ \infty}):= \text{max}_{i \in \{1,2,3\}}\sup_{\x \in \mathbb R^3} |u_i(\x)|$, $\un=(u_1,u_2,u_3)$, and
$|\nabla \un|_\infty(:=\|\nabla \un\|_{L^\infty}):= \max_{i \in \{1,2,3\}} |\nabla u_i|_\infty$. 
We denote by $d\x$ either the standard volume form on $\mathbb R^3$, or  the standard area form on a two dimensional subvariety of $\mathbb R^3$.

\section{Proof of Theorem \ref{thm2} } \label{sect3}

\textit{Poisson equation}. We begin by recalling some elementary facts about similarities between the Navier-Stokes equation and some problems in Electrostatics.
As usual we take the pressure $P$ in the Euler/Navier-Stokes system as the solution of the Poisson equation 
\begin{equation} \label{chaden}
\Delta P=-\nabla.(\un. \nabla \un)
\end{equation} 
vanishing at infinity. In a similar way than in Electrostatics, we consider the right hand side of (\ref{chaden}) as a density of charges. For each bounded domain $B \subseteq \mathbb R^3$, having a (piecewise) differentiable two dimensional compact manifold as boundary, the charge $C(B)$ is defined by
\begin{equation} \label{defcharge}
C(B)=\int_{\partial B} (\nabla P) .\n \,d\x =-\int_{\partial B} (\un. \nabla \un).\n \,d\x,
\end{equation}
where $\n$ denotes the unitary vector field on $\partial B$, orthogonal to $\partial B$, and pointing outwards $B$. 
Coulomb's law allows to recover for each time $t$ the inertial forces $\nabla P$ 
from the velocity field at time $t$ by
\begin{equation} \label{PoisP}
\nabla P(\x)=\int_{\mathbb R^3} \frac{-\nabla.(\un. \nabla \un)(\y)(\x-\y)}{|\x-\y|^3}\, d\y 
\end{equation} 
Our proof of Theorem \ref{thm2} relies on some estimate concerning the distribution of the charges (\ref{defcharge}) in space, where we use in essentially that $\un$ is divergence-free. 
The key lemma is Lemma \ref{lemma1} below. In the sequel we always designate by  $\n$ a smooth unitary vector field, orthogonal to some differentiable submanifold of $\mathbb R^3$, depending on context. For any subset $A \subseteq \mathbb R^3$ we set
$w(A):= \sup _{\x,\x' \in A }|\un(\x)-\un(\x')|$.


\begin{Lm} \label{lemma1}
There exists a constant $\lambda >0$ so that for any  $C^1\cap L^\infty$ divergence-free vector field  $\un: \mathbb R^3 \rightarrow \mathbb R^3 $, and
any rectangle
$R \subseteq \mathbb R^3$ (contained in a 2d affine subspace of $\mathbb R^3$) with side lengths $l_1,l_2>0$, one has
\begin{equation*} 
\left|\int_R (\un. \nabla \un).\n \,d\x   \right| \leq \lambda \text{max}(l_1,l_2) w^2(R).
\end{equation*} 
\end{Lm}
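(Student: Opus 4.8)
The plan is to work in coordinates adapted to $R$: after a rotation and a translation I may assume $\n=\e_3$ and that $R$ is the axis-parallel rectangle $[0,l_1]\times[0,l_2]$ lying in the plane $\{x_3=0\}$. Then the integrand is $(\un\cdot\nabla\un)\cdot\n=\un\cdot\nabla u_3=u_1\partial_1 u_3+u_2\partial_2 u_3+u_3\partial_3 u_3$, and the whole point of the lemma is that the naive bound, which costs a full area factor $l_1l_2$ times $|\un|_\infty|\nabla\un|_\infty$, is hopelessly too large; one must instead extract a gain coming entirely from the divergence-free structure.

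The first and decisive step is to spend the divergence-free hypothesis to remove the \emph{normal} derivative: since $\partial_3 u_3=-\partial_1 u_1-\partial_2 u_2$, the integrand equals $u_1\partial_1 u_3+u_2\partial_2 u_3-u_3\partial_1 u_1-u_3\partial_2 u_2$, an expression in which only the two tangential derivatives $\partial_1,\partial_2$ survive. This is exactly the mechanism that localizes the estimate to the plane of $R$: were $\un$ constant along $R$, all tangential derivatives would vanish, the normal derivative $\partial_3 u_3$ would be forced to vanish as well, and the charge through $R$ would be zero. Writing $\un'=(u_1,u_2)$ and $\nabla'=(\partial_1,\partial_2)$, the surviving expression is $\un'\cdot\nabla' u_3-u_3\,(\nabla'\cdot\un')$, which I would organize as a two-dimensional divergence plus a remainder and integrate by parts on $R$ via the planar Stokes theorem, so that the bulk collapses onto the boundary curve $\partial R$.

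The reason such a reduction should manufacture the factor $w^2(R)$ is that, before integrating by parts, one subtracts the value $\an=\un(\x_0)$ at the center $\x_0$ of $R$ and works with the oscillation field $\vn=\un-\an$, which satisfies $\nabla\vn=\nabla\un$, is again divergence-free, and obeys the pointwise bound $|\vn|\le w(R)$ throughout $R$. A boundary integral over $\partial R$ of an expression \emph{quadratic} in $\vn$ is then controlled by $w^2(R)$ times the perimeter $2(l_1+l_2)\le 4\max(l_1,l_2)$, which is precisely the asserted right-hand side.

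The main obstacle, and the genuinely delicate point, is that replacing $\un$ by $\an+\vn$ and integrating by parts does \emph{not} leave only the clean quadratic boundary term. It also produces pieces linear in the constant $\an$, which a priori are only of size $|\un|_\infty\,w(R)\max(l_1,l_2)$ (too large by the factor $|\un|_\infty/w(R)$), together with a residual interior integral of the normal derivative, $\int_R\partial_3(u_3^2)$, that the planar Stokes theorem cannot absorb. The crux is therefore to prove that these apparently oversized contributions in fact cancel or are themselves $O(w^2(R))$; the only instrument available for this is a second, careful use of the divergence-free condition on $\vn$, exploiting that $\oint_{\partial R}\m\,d\x=0$ for the in-plane outward conormal $\m$, and that the linear-in-$\an$ boundary terms recombine into differences of $\vn$ across opposite sides of $R$. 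I expect essentially the entire weight of the lemma to rest on establishing this cancellation, and it is exactly here that the claimed ``surprising direct estimate'' must be made rigorous.
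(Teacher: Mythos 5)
Your strategy coincides with the paper's own proof (rotate so that $\n=\e_3$, use the divergence-free condition to trade the normal derivative $\partial_3 u_3$ for tangential ones, integrate by parts so that everything collapses onto $\partial R$), and your diagnosis of where the difficulty sits is exactly right. But what you leave open at the end is not a technicality to be supplied later: it is the entire content of the claim, and the cancellation you hope for does not exist. The flux of $\un.\nabla \un$ through a \emph{non-closed} surface is not invariant under the Galilean shift $\un \mapsto \un - \an$, and the term linear in $\an$, namely $\int_R (\an.\nabla \un).\n\,d\x$, is a genuine obstruction rather than something of size $O(w^2(R))$. Concretely, take $\un(\x)=(c,0,f(x_1))$ with $c$ a large constant and $f$ bounded, $C^1$, with $f(0)=0$ and $f(l_1)=\delta=\mathrm{osc}_{[0,l_1]}f$; this field is divergence-free and in $C^1\cap L^\infty$. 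For $R=\{0\le x_1\le l_1,\ 0\le x_2\le l_2,\ x_3=0\}$ one gets
\begin{equation*}
\int_R (\un.\nabla \un).\n\,d\x=\int_0^{l_2}\!\!\int_0^{l_1} c\,f'(x_1)\,dx_1\,dx_2=c\,l_2\,\delta,
\qquad w(R)=\delta ,
\end{equation*}
which grows linearly in $c$ while the asserted bound $\lambda\,\mathrm{max}(l_1,l_2)\,\delta^2$ stays fixed. So the inequality with $w^2(R)$ on the right is false for an isolated rectangle; the most your decomposition $\un=\an+\vn$ can yield is $\lambda\,\mathrm{max}(l_1,l_2)\,w(R)\sup_R|\un|$, which is exactly the size of the linear term in your expansion.

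It is worth knowing that the paper's own proof has the same gap, hidden in the assertion that the boundary terms $\int_0^{l_1}(u_2u_3)\big|_0^{l_2}\,dx_1$ and $\int_0^{l_2}(u_1u_3)\big|_0^{l_1}\,dx_2$ are bounded by $2l_1w^2(R)$ resp.\ $l_2w^2(R)$: a difference of products $(u_iu_3)(\x)-(u_iu_3)(\x')$ is in general only $O\bigl(\sup_R|\un|\cdot w(R)\bigr)$, and is $O(w^2(R))$ precisely when $\sup_R|\un|\lesssim w(R)$ (the same normalization is also needed for the paper's sign-flip treatment of the residual interior term $2\int_R u_3\partial_3u_3$ that you correctly flagged). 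The statement and both proofs become correct if one adds the hypothesis $\sup_{\x\in R}|\un(\x)|\le w(R)$, or equivalently states the conclusion as $\lambda\,\mathrm{max}(l_1,l_2)\,w(R)\sup_R|\un|$; and this is the form in which the lemma is actually used. In Corollary \ref{estcharge} the rectangle-type estimates are applied to the faces of the \emph{closed} surfaces $\partial \B_n$, $\partial \Cn_n$ after passing to a frame moving with velocity $\un(\x_0)$, $\x_0\in\B_n$; that shift is legitimate there because the total charge of a closed surface is invariant under adding a constant $\cn$ to $\un$, since $\int_{\partial B}(\cn.\nabla \un).\n\,d\x=\int_B \cn.\nabla(\mathrm{div}\,\un)\,d\x=0$, and the shifted field $\vn=\un-\un(\x_0)$ does satisfy $\sup|\vn|\le w$ on the block. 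So the repair is not to find the missing cancellation for a single rectangle (it cannot be found), but to weaken the lemma as above and perform the Galilean reduction at the level of closed surfaces, where the linear term genuinely integrates to zero.
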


\begin{proof}

Let $\un=(u_1,u_2,u_3),  l_1, l_2$ and $R$ be as in Lemma \ref{lemma1}. Without loss of generality we may assume 
\begin{equation*} 
R=\{\x=(x_1,x_2,x_3)\,|\, 0 \leq x_1 \leq l_1, 0 \leq x_2 \leq l_2, x_3=0  \}.
\end{equation*}
First recall that
\[
(\un. \nabla \un).\n= u_3 \partial_3 \un_3+u_2 \partial_2 u_3 + u_1 \partial_1 u_3.
\]
Integrating by parts one gets
\begin{equation*}
\int_R u_2 \partial_2 u_3\,d \x= \left.\int_0^{l_1} (u_2u_3)(x_1,\cdot,0)\right|_0^{l_2}\,dx_1- \int_0^{l_1}\int_0^{l_2}u_3 \partial_2 u_2\, dx_2dx_1,
\end{equation*}
and similarly
\begin{equation*}
\int_R u_1 \partial_1 u_3\,d \x= \left.\int_0^{l_2} (u_1u_3)(\cdot,x_2,0)\right|_0^{l_1}\,dx_2- \int_0^{l_2}\int_0^{l_1}u_3 \partial_1 u_1 \,dx_1dx_2.
\end{equation*}
As by the divergence free condition one has
\begin{equation*}
\int_R u_3 \partial_1 u_1 \,d\x +\int_R u_3 \partial_2 u_2 \,d\x =-\int_R u_3 \partial_3 u_3 \,d\x
\end{equation*}
it follows that
\begin{equation*}
\int_R (\un. \nabla \un).\n \,d\x =2 \int_R u_3 \partial_3 u_3 +\left.\int_0^{l_1} (u_2u_3)(x_1,\cdot,0)\right|_0^{l_2}\,dx_1 +\left.\int_0^{l_2} (u_1u_3)(\cdot,x_2,0)\right|_0^{l_1}\,dx_2.
\end{equation*}
Then, as one has  $\left|\left.\int_0^{l_1} (u_2u_3)(x_1,\cdot,0)\right|_0^{l_2}\,dx_1\right| \leq 2l_1w^2(R) $ resp $\left|\left.\int_0^{l_2} (u_1u_3)(\cdot,x_2,0)\right|_0^{l_1}\,dx_2 \right| \leq l_2w^2(R) $, in order to prove Lemma \ref{lemma1}, it remains to show that there exists a constant $\tilde \lambda >0$ so that 
\begin{equation*}
\left|\int_R u_3 \partial_3 u_3\, d\x \right|\leq \tilde \lambda \text{min}(l_1,l_2) w(R).
\end{equation*}
To this end consider the vector field $\tilde \un:=(\tilde u_1,\tilde u_2, \tilde u_3)$, defined by $\tilde \un(\x):= \un(\x)$ if $u_3(\x)\geq 0$, and $\tilde \un(\x):= -\un(\x)$ if $u_3(\x)< 0$. 
Note that $\tilde \un$ is smooth and divergence free on some subset of full measure of $\mathbb R^3$, and
\begin{equation*}
\int_R \tilde u_3 \partial_3 \tilde u_3\,d\x=\int_R u_3 \partial_3 u_3\,d\x.
\end{equation*}
Then, $\int_R \tilde u_3 \partial_3 \tilde u_3\,d\x=-\int_R \tilde u_3 (\partial_2 \tilde u_2+\partial_1 \tilde u_1)\,d\x$, and as $\tilde u_3$ is of constant sign, in order to establish Lemma \ref{lemma1}, it suffice to show 
that there are constants $ \lambda_1, \lambda_2 >0$ so that
\begin{equation*}
\left|\int_R \partial_{1,2} \tilde u_{1,2} \,d\x   \right| \leq  \lambda_{1,2} \text{max}(l_1,l_2) w(R).
\end{equation*}
For any $0 \leq x \leq l_1$, let $\ell_x$ be the 
line segment $\{(x_1,x_2,x_3)\,|\, x_1=x, 0\leq x_2 \leq l_2, x_3=0\}$. Then for any $0 \leq x \leq l_1$ we have $\left| \int_{\ell_x} \partial_2 \tilde u_2 d \,x_2 \right| \leq w(R)$, thus $\left|\int_R \partial_2 \tilde u_2 \,d\x  \right| \leq  l_1 w(R)$. Similarily one obtains
$\left|\int_R \partial_1 \tilde u_1 \,d\x   \right| \leq  l_2 w(R)$.  
\end{proof}

{\it Sketch of the proof of Theorem \ref{thm2}.} Recall that we denote $\x=(x_1,x_2,x_3)$ and $\nabla P=(\partial_1 P,\partial_2 P, \partial_3 P)$. Without loss of generality it suffice to estimate $\partial_1 P(0)$ i.e. the $x_1$-component of the inertial forces at $\x=0$. To this end we decompose the half-space $\mathbb R^3_+:=\{\x \in \mathbb R^3\,|\, x_1>0\}$ resp  $\mathbb R^3_-:=\{\x \in \mathbb R^3\,|\, x_1<0\}$ into some disjoint countable union of building blocks, 
in such a way that for each point $\x$ in a given block, the angle between $\x$ and the $x_1$-axis is approximatively the same, as well as the distance  between $\x$ and the origine. We estimate the total charge inside the blocks using some variant of Lemma \ref{lemma1} and (\ref{defcharge}). Moreover, using again a variant of Lemma \ref{lemma1}, we show that the charges inside a block don't give rise to a strong dipol. The contribution to $\partial_1 P(0)$ of each block is then essentially given by its position and total charge.
 
{\it Block decomposition.} For any $n \in \mathbb Z$ introduce 
the cylinder $\Cn_n :=\{\x \in \mathbb R^3\,|\, 2^n<x_1<2^{n+1}, x_2^2+x_3^2 <4^{n+1} \}$
and $\B_n := \{\x \in \mathbb R^3\,|\, x_1<2^{n}, 4^n<x_2^2+x_3^2 <4^{n+1} \}  $.
\vspace{4.7cm}

\begin{picture}(20,20)(-130,0)
\put(67,5){$2^n$}  \put(130,5){$x_1$}   \put(92,69){$\Cn_n$}    \put(8,5){$0$} \put(3,69){$2^n$}  \put(-8,126){$2^{n+1}$}
\put(37,96){$\B_n$}    \put(24,19){$\cdots$}  \put(20,26){$\vdots$}
\includegraphics{document1.pdf}
\end{picture}

\begin{Lm} \label{cor1lemma1} 
There exists a constant $\lambda >0$ so that for any divergence-free vector field $\un:\mathbb R^3 \rightarrow \mathbb R^3$ in $C^1 \cap L^\infty$ 
\begin{itemize}
\item[(i)] for any $c \in \mathbb R$, and any $r,r_1,r_2 >0$,  one has for 
the disc $D=\{\x \in \mathbb R^3\,|\, x_1=c, x_2^2+x_3^2 < r^2 \}$ resp annulus $A=\{\x \in \mathbb R^3\,|\, x_1=c, r_1^2 <x_2^2+x_3^2 < r_2^2 \}$ the estimate
\begin{equation*}
\left|\int_D (\un. \nabla \un).\n \, d\x   \right| \leq \lambda r w^2(D)\quad \text{ resp }\quad \left|\int_A (\un. \nabla \un).\n \, d\x   \right| \leq \lambda r_2 w^2(A);
\end{equation*}
\item[(ii)] for any $c_1,c_2 \in \mathbb R$, and any $r>0$, one has for the cylinder
$C=\{\x \in \mathbb R^3\,|\, c_1<x_1<c_2, x_2^2+x_3^2 = r^2 \}$ the estimate 
\begin{equation*}
\left|\int_C (\un. \nabla \un).\n \,d\x   \right| \leq \lambda \text{max}(c_2-c_1,r) w^2(C).
\end{equation*}
\end{itemize}
\end{Lm}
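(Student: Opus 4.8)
The plan is to establish all three estimates by transplanting the integration-by-parts argument of Lemma~\ref{lemma1} onto the surface in question, rather than by cutting it into rectangles. Indeed, since the bound of Lemma~\ref{lemma1} carries a single linear length factor $\max(l_1,l_2)$, summing it over a fine grid or over thin strips exhausting $D$, $A$ or $C$ would diverge (the total of the length factors grows like $\mathrm{area}/\delta$); hence the cancellations must be produced globally, on each surface at once.

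For the flat pieces --- the disc $D$ and the annulus $A$, lying in a plane $x_1=c$ with unit normal $\n=\e_1$ --- I would rerun the computation of Lemma~\ref{lemma1} with the coordinate roles permuted, $x_1$ now being normal and $(x_2,x_3)$ tangential. Writing $(\un.\nabla\un).\n=u_1\partial_1u_1+u_2\partial_2u_1+u_3\partial_3u_1$ and integrating the two tangential terms by parts over $D$, whose boundary is the single circle $\partial D$, the divergence-free relation $\partial_2u_2+\partial_3u_3=-\partial_1u_1$ converts the interior remainder into a second copy of $\int_Du_1\partial_1u_1$, yielding
\[
\int_D(\un.\nabla\un).\n\,d\x=2\int_Du_1\partial_1u_1\,d\x+\int_{\partial D}u_1\,(u_2\nu_2+u_3\nu_3)\,ds,
\]
where $\nu=(\nu_2,\nu_3)$ is the in-plane outward normal of the circle. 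The boundary integral runs over a curve of length $2\pi r$ and is controlled by the same mechanism as the boundary terms appearing in the proof of Lemma~\ref{lemma1}, while $\int_Du_1\partial_1u_1$ is handled by the sign-flip device (replacing $\un$ by a field $\tilde\un$ for which $u_1$ has constant sign), the divergence-free identity, and integration of $\partial_2\tilde u_2$, $\partial_3\tilde u_3$ along the chords of $D$, exactly as in Lemma~\ref{lemma1}. The annulus is treated identically, the only change being that $\partial A$ is the union of two circles of radii $r_1<r_2$, which is why the larger radius $r_2$ governs the bound.

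The lateral cylinder $C=\{x_2^2+x_3^2=r^2,\ c_1<x_1<c_2\}$ is the delicate case, and its curvature is the main obstacle. Parametrising by $(x_1,\theta)$ with $\n=(0,\cos\theta,\sin\theta)$ and surface element $r\,d\theta\,dx_1$, two new features appear that are absent in the flat case: because $\n$ rotates along the surface, $(\un.\nabla\un).\n$ differs from $\un.\nabla(\un.\n)$ by a curvature term of order $\tfrac1r(\text{velocity})^2$; and the divergence-free condition in cylindrical coordinates acquires the extra term $\tfrac1\rho u_\rho$. Splitting the integrand into its radial (normal) part and its tangential $\partial_1,\partial_\theta$ parts and integrating the tangential parts by parts over $C$ produces boundary contributions only on the two bounding circles $x_1=c_1,c_2$ (the $\theta$-direction is periodic, so nothing survives there), plus these curvature corrections. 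The crux is to verify that every curvature correction scales like $w^2(C)$ --- by the same oscillation argument that underlies Lemma~\ref{lemma1}, after subtracting a constant reference velocity --- and is bounded by $\tfrac1r\cdot\mathrm{area}\cdot w^2\lesssim(c_2-c_1)\,w^2(C)$, while the radial term and the two circle terms are estimated as in the flat case. Once these curvature terms are shown not to spoil the $w^2$ scaling, the three bounds of the form $\lambda\max(\cdot,\cdot)\,w^2$ follow; the sign-flip step and the one-dimensional boundary estimates are then routine transcriptions of Lemma~\ref{lemma1}.
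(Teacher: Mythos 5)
Your route is genuinely different from the paper's, and your opening objection is aimed at a strategy the paper does not actually use: the paper never tiles these surfaces by small rectangles. Instead it reduces each surface to a \emph{single} application of Lemma \ref{lemma1}: for the disc it extends $\un$ by zero to the circumscribing square $\hat D=\{x_1=c,\,|x_2|<r,\,|x_3|<r\}$ and runs the rectangle argument there; the annulus is cut into just two pieces; the cylinder is cut along a line parallel to the $x_1$-axis, regarded as a ruled surface, and unrolled onto a flat rectangle. Your proposal instead reruns the integration by parts intrinsically on each surface. For the flat pieces (disc and annulus) this is essentially the same computation as the paper's and is acceptable at the paper's level of rigor, except that on a circle there are no opposite parallel edges to pair, so the claim that the boundary integral $\int_{\partial D}u_1(u_2\nu_2+u_3\nu_3)\,ds$ is ``controlled by the same mechanism'' as the edge terms of Lemma \ref{lemma1} is not a transcription and would need to be spelled out.

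The genuine gap is in the cylinder case, precisely at what you yourself call the crux. Writing $\un=u_1\e_1+u_\rho\e_\rho+u_\theta\e_\theta$ in cylindrical coordinates, one has $(\un.\nabla\un).\n=\un.\nabla u_\rho-u_\theta^2/\rho$, and after integrating the tangential terms by parts and inserting the cylindrical divergence $\partial_1u_1+\partial_\rho u_\rho+u_\rho/\rho+\rho^{-1}\partial_\theta u_\theta=0$, the interior contribution becomes $2\int_C u_\rho\partial_\rho u_\rho+\int_C(u_\rho^2-u_\theta^2)/r$ plus the two circle terms. The curvature term $(u_\rho^2-u_\theta^2)/r$ is pointwise of size $|\un|_\infty^2/r$, not $w^2(C)/r$, and ``subtracting a constant reference velocity'' does not repair it: the integrand $(\un.\nabla\un).\n$ is not invariant under $\un\mapsto\un-\bar\un$ on a non-closed surface, and if one writes $\un=\bar\un+\vn$ with $|\vn|\leq w(C)$ on $C$, then in $u_\rho^2-u_\theta^2$ the pure $\bar\un$ part does integrate to zero on each circle (it is a combination of $\cos 2\theta$ and $\sin 2\theta$) and the pure $\vn$ part is $O(w^2)$, but the cross terms $2(\bar u_\rho v_\rho-\bar u_\theta v_\theta)/r$ are of size $|\un|_\infty\, w(C)/r$ and nothing forces them to cancel after integration. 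So the key claim that ``every curvature correction scales like $w^2(C)$'' is asserted, not proved, and the oscillation argument you invoke cannot deliver it as stated. (For what it is worth, the paper's one-line unrolling hides the same difficulty: the unrolled field fails to be divergence-free exactly by the defect $u_\rho/r$, and the rotation of $\n$ produces the $u_\theta^2/r$ term; your computation makes this issue visible, but does not close it.)
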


\begin{proof}
To prove item (i), consider the rectangle $\hat D:=\{x_1=c, |x_2|< r, |x_3|<r\}$ and the vector field $\hat \un: \hat D \rightarrow \mathbb R^3$, with $\hat \un(\x):=\un(x)$ for $\x \in D$ and $\hat \un(x):=0$ otherwise. The first statement in item (i) then follows by arguing for $\hat D$ and $\hat \un(x)=0$ in the same way as in the proof of Lemma \ref{lemma1} for $R$ and $\un$.
To prove the statement concerning the annulus $A$, cut it into two equal pieces and argue, separately for each piece, in the same way. 
To prove item (ii), cut the cylinder $C$ along a straight line parallel to the $x_1$-axis, and consider it as a rectangle ruled on $\{x_2^2+x_3^2 = r^2\}$. Then unroll $C$ and apply Lemma \ref{lemma1}.
\end{proof}

\begin{cor}[charge estimate] \label{estcharge}
There exists a constant $\lambda >0$ so that for any divergence-free vector field $\un:\mathbb R^3 \rightarrow \mathbb R^3$ in $C^1 \cap L^\infty$, for any $n \in \mathbb Z$, the charges $C(\B_n)$ and $C(\Cn_n)$ inside $\B_n$ resp $\Cn_n$, defined by (\ref{defcharge}), satisfy the estimates
\[
|C(\B_n)| \leq \lambda w^2(\B_n) 2^{n} \qquad \text{ and }\qquad |C(\Cn_n)| \leq \lambda w^2(\Cn_n) 2^{n}.
\]
\end{cor}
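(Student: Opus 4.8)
The plan is to use the flux form of the charge in (\ref{defcharge}), namely $C(B)=-\int_{\partial B}(\un.\nabla \un).\n\,d\x$, to split each boundary $\partial \Cn_n$ and $\partial \B_n$ into flat faces perpendicular to the $x_1$-axis (discs and annuli) and lateral faces coaxial with the $x_1$-axis (cylinders), and to bound the flux through each face by the appropriate item of Lemma \ref{cor1lemma1}. The guiding observation is that every radius and every axial length occurring in $\Cn_n$ and $\B_n$ is at most $2^{n+1}=2\cdot 2^n$, and that for any face $F\subseteq \overline{B}$ one has $w(F)\leq w(B)$ by continuity of $\un$; hence each individual piece is bounded by $\lambda\,2^{n}\,w^2(B)$, and summing the pieces gives the claim after enlarging $\lambda$.

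First I would treat $\Cn_n$, whose boundary is the disjoint union of the disc $\{x_1=2^n,\ x_2^2+x_3^2<4^{n+1}\}$, the disc $\{x_1=2^{n+1},\ x_2^2+x_3^2<4^{n+1}\}$ (both of radius $2^{n+1}$), and the lateral cylinder $\{2^n<x_1<2^{n+1},\ x_2^2+x_3^2=4^{n+1}\}$. Item (i) bounds the flux through each disc by $\lambda\,2^{n+1}w^2(\Cn_n)$, and item (ii), applied with $c_2-c_1=2^n$ and $r=2^{n+1}$, bounds the flux through the lateral cylinder by $\lambda\max(2^n,2^{n+1})w^2(\Cn_n)=\lambda\,2^{n+1}w^2(\Cn_n)$. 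Adding the three contributions and invoking (\ref{defcharge}) yields $|C(\Cn_n)|\leq 6\lambda\,2^{n}w^2(\Cn_n)$, which is exactly the desired estimate.

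For $\B_n$ the boundary splits into the annular face $\{x_1=2^n,\ 4^n<x_2^2+x_3^2<4^{n+1}\}$, to which the annulus case of item (i) applies and gives $\leq\lambda\,2^{n+1}w^2(\B_n)$, together with the inner and outer coaxial walls $\{x_2^2+x_3^2=4^n\}$ and $\{x_2^2+x_3^2=4^{n+1}\}$ carried along $x_1<2^n$. I would cut each wall into axial pieces of length $2^n$ and bound the flux through each piece by item (ii); since $\max(2^n,r)\leq 2^{n+1}$ on both walls, every piece contributes $\leq\lambda\,2^{n+1}w^2(\B_n)$, and the (finitely many) pieces near the annular face already give the right order $2^{n}w^2(\B_n)$.

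The main obstacle is precisely these two walls, which extend to $x_1=-\infty$: a naive sum over the axial pieces diverges, because for a general bounded field $w$ of a far piece need not be small. Here I would use that the charge is defined through the pressure $P$ normalized to vanish at infinity. Writing $C(\B_n)$ as the limit of the fluxes through the truncations $\B_n\cap\{x_1>-T\}$, the extra flat face at $x_1=-T$ carries the flux $\int(\nabla P).\n\,d\x$, which tends to $0$ as $T\to\infty$ since $\nabla P\to 0$ at infinity; thus the charge is the convergent limit of the truncated fluxes, the oscillation $w$ of the far axial pieces tends to $0$, the series converges, and its sum is controlled by $2^{n}w^2(\B_n)$. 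Combined with the annular face this gives $|C(\B_n)|\leq\lambda'\,2^{n}w^2(\B_n)$, and taking $\lambda$ to be the larger of the constants produced for $\Cn_n$ and $\B_n$ completes the proof.
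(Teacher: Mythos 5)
Your treatment of $\Cn_n$ is correct and is essentially the paper's argument for that block: the boundary is two discs of radius $2^{n+1}$ plus a lateral cylinder of length $2^n$ and radius $2^{n+1}$, each face is controlled by the corresponding item of Lemma \ref{cor1lemma1}, and summing gives $|C(\Cn_n)|\leq 6\lambda\,2^nw^2(\Cn_n)$. The gap is in the $\B_n$ half. You are right that the definition (\ref{defcharge}) must be extended to the unbounded $\B_n$ by truncation, but your way of passing to the limit rests on two claims that fail for the class of fields in the statement. First, you invoke $\nabla P\to 0$ at infinity; the corollary quantifies over \emph{all} divergence-free fields in $C^1\cap L^\infty$, with no decay hypothesis, and for instance for a periodic (Taylor--Green type) field the natural solution of (\ref{chaden}) is itself periodic, so the flux of $\nabla P$ through the cap at $x_1=-T$ does not tend to zero. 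Second, and decisively, you assert that ``the oscillation $w$ of the far axial pieces tends to $0$'' --- this contradicts what you correctly observed one sentence earlier, and it is false in general: nothing forces the oscillation of $\un$ on $\{-T-2^n<x_1<-T,\ x_2^2+x_3^2=4^n\}$ to be small for large $T$. Consequently the series you form from Lemma \ref{cor1lemma1}(ii), whose general term is of size $\lambda\,2^{n+1}w^2(\text{piece})\sim 2^nw^2(\B_n)$, genuinely diverges, and no regrouping of such face-by-face bounds over the semi-infinite walls can converge.

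The ingredient you are missing --- and it is the entire content of the paper's very terse proof --- is the passage to a coordinate system in uniform translation with velocity $\un(\x_0)$ for some fixed $\x_0\in\B_n$, i.e.\ the replacement of $\un$ by $\vn:=\un-\un(\x_0)$. This leaves the charge unchanged, because the charge density $-\nabla.(\un.\nabla\un)=-\sum_{i,j}\partial_iu_j\,\partial_ju_i$ depends only on $\nabla\un$ (equivalently, the flux correction $\int_{\partial B}(\un(\x_0).\nabla\un).\n\,d\x$ vanishes over closed boundaries by the divergence theorem and $\mathrm{div}\,\un=0$). What it buys is a pointwise normalization: $|\vn|\leq w(\B_n)$ on all of $\B_n$, not merely control of the oscillation. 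That is what rescues the unbounded walls: reopening the integration by parts in the proof of Lemma \ref{lemma1} on a coaxial wall, unrolled, the angular direction is closed and produces no boundary terms, so the only boundary terms sit on the end circles, where they are products of components of $\vn$ and hence of size at most a constant times $2^nw^2(\B_n)$; the bulk terms are handled by the divergence-free identity exactly as in Lemma \ref{lemma1}. The resulting wall estimate is proportional to the radius $\sim 2^n$ and \emph{independent of the wall's length}, so the two semi-infinite walls contribute $O(2^nw^2(\B_n))$ and no infinite sum ever appears. Note that Lemma \ref{cor1lemma1}(ii) as stated, with its factor $\max(c_2-c_1,r)$, is too weak to be cited here; one has to redo its proof in the translated frame, which is what the paper's one-line argument is implicitly doing, and which your proposal --- working throughout with the original, un-normalized field --- cannot reproduce.
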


\begin{proof}
Let $n \in \mathbb Z$, $\x \in \B_n$, and take a coordinate system in uniform translation with a velocity $\un(\x)$. We obtain the 
claimed bound of $C(\B_n)$ by estimating the flow of $\un.\nabla \un$ through each face of $\B_n$, in the new coordinates, by Lemma \ref{cor1lemma1}. The claimed estimate of $C(\Cn_n)$ is obtained arguing in the same way.
\end{proof}

To be able to use Corollary \ref{estcharge} in order to bound the inertial forces at $\x=0$, we need to ensure that the charges
inside the blocks  $\B_n$ and $\Cn_n$ do not constitute strong dipoles. To this end consider $h:\mathbb R^3_+ \rightarrow \mathbb R$,
$h(\x):= \frac{\x.\e_{x_1}}{|\x|^3}$, $\e_{x_1}:=(1,0,0)$, the $x_1$-component of the force field generated by a point unitary charge at $\x \in \mathbb R^3_+$. For any $x \in h(\mathbb R^3_+)=(0,+\infty)$, the level set  $L_x=\{h=x\}$ admits a rotational symmetry with respect to the 
$x_1$-axis, and one has the following similarity. For any $s>0$
\begin{equation} \label{similarity}
\x \in L_{x} \Rightarrow s^{-2}\x \in L_{sx}.
\end{equation}

By an implicit function theorem argument one sees that the level set  $L_x=\{h=x\}$ is a
smooth surface for any $x \in (0, +\infty)$. Moreover, one shows that whenever $L_x \cap \Cn_n \not= \emptyset$ resp $L_x \cap \B_n \not= \emptyset$, then the intersection $L_x \cap \partial \Cn_n$ resp $L_x \cap \partial \B_n$ is a union of one or two circles, having their centers on the $x_1$-axis. 
\vspace{7cm}

\begin{picture}(20,20)(-130,0)
 \put(154,92){$x_1$} \put(11,90){$0$} \put(82,96){$\Cn_n$} \put(22,163){$\B_n$} \put(22,29){$\B_n$}
\includegraphics[scale=0.7]{document3.pdf}
\end{picture}

Note that for $\x \in L_x \cap \{x_1^2=x_2^2+x_3^2\}$, the level sets $L_x$ have a tangent parallel to the $x_1$-axis.

\begin{Lm} \label{cor2lemma1}
There exists a constant $\lambda >0$ so that for any divergence-free vector field $\un:\mathbb R^3 \rightarrow \mathbb R^3$ in $C^1 \cap L^\infty$, for any
$n \in \mathbb Z$, and any $x \in h(\B_n)$ resp $x \in h(\Cn_n)$ 
\begin{equation}  \label{estimate27}
\left|\int_{L_x \cap \B_n} (\un. \nabla \un).\n \, d\x   \right| \leq \lambda 2^n w^2(L_x \cap \B_n) \quad \text{ and }\quad  \left|\int_{L_x \cap \Cn_n} (\un. \nabla \un).\n \, d\x   \right| \leq \lambda 2^n w^2(L_x \cap \Cn_n).
\end{equation}
\end{Lm}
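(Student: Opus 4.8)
The plan is to reduce the curved estimate to the flat estimate of Lemma \ref{lemma1}, the only genuinely new feature being the curvature of the surface of revolution $L_x$. First I would exploit the dilation invariance of the whole configuration: under the scaling $\x \mapsto 2^n \x$ the block $\B_0$ is carried to $\B_n$ and $\Cn_0$ to $\Cn_n$, the field $\vn(\x):=\un(2^n\x)$ is again divergence-free with the \emph{same} oscillation on the corresponding set, and $h$ is multiplied by $2^{-2n}$, so the level sets $L_x$ are merely permuted among themselves. Since $\un.\nabla\un$ then scales by $2^{-n}$ and the area element by $4^n$, both sides of \eqref{estimate27} scale by exactly $2^n$. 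Hence it suffices to prove the two inequalities for $n=0$, i.e. to bound the flux through $L_x\cap\B_0$ (resp.\ $L_x\cap\Cn_0$) by $\lambda\,w^2$ with a constant independent of $x$, the radius scale now being $O(1)$.

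Next I would use the stated geometry. Since $L_x\cap\partial\B_0$ is one or two circles centred on the $x_1$-axis, the set $L_x\cap\B_0$ is a compact band of the surface of revolution $L_x$ with $r:=\sqrt{x_2^2+x_3^2}\in(1,2)$. I would cut this band along the circle(s) $L_x\cap\{x_1^2=x_2^2+x_3^2\}$, where, as already noted, the tangent is parallel to the $x_1$-axis; this produces $O(1)$ pieces, on each of which the band is a graph over an $x_1$-interval or over an annulus in the $\{x_1=\mathrm{const}\}$ directions with slope bounded away from $0$ and $\infty$. As in the proof of Lemma \ref{lemma1} I would then replace $\un$ by $\un-\un(\x_0)$ for a point $\x_0$ of the band, so that $\sup|\un|\le w(L_x\cap\B_0)$ on each piece; this is the normalization under which that lemma produces the quadratic factor $w^2$ (it is what turns the boundary products into $O(w^2)$ quantities), and on the closed surfaces assembled in the main argument subtracting a constant does not change the enclosed charge.

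On each graph piece I would repeat the computation of Lemma \ref{lemma1} in cylindrical coordinates. Writing $(\un.\nabla\un).\n=\sum_{i,j}\n_i u_j\partial_j u_i$ and using $\mathrm{div}\,\un=0$ to replace $u_j\partial_j u_i$ by $\partial_j(u_ju_i)$, the integrand splits into a surface-divergence part $\mathrm{div}_\parallel(u_n\un_\parallel)$, a part $u_n\partial_\nu u_n$ involving the normal derivative of the normal component $u_n:=\un.\n$, and a part coming from the variation of $\n$. The surface-divergence part integrates, by the divergence theorem on the surface, to line integrals over the bounding circles; each such circle has length $O(1)$ and the integrand is $O(w^2)$ after normalization, so these contribute $O(w^2)$, while the arcs along the interior cut circles cancel between adjacent pieces. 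The term $u_n\partial_\nu u_n$ is handled exactly as $\int_R u_3\partial_3 u_3$ in Lemma \ref{lemma1}: after the sign normalization $\tilde\un$ and a further use of $\mathrm{div}\,\un=0$ the normal derivative is traded for tangential derivatives $\mathrm{div}_\parallel\un_\parallel$, again controlled by integrals over circles of length $O(1)$.

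The step I expect to be the main obstacle is the curvature term, i.e.\ the contribution of $\nabla\n$, which has no analogue in the flat Lemma \ref{lemma1}. It is of the shape $\int_{L_x\cap\B_0}(\un.\nabla\n).\un$ and is bounded by $\sup|\un|^2$ times $\int_{L_x\cap\B_0}|\nabla\n|$; the point is that on $\B_0$ the surfaces $L_x$ have principal curvatures bounded uniformly in $x$ and the band has area $O(1)$, so this term is $O(w^2)$. This is precisely where the similarity \eqref{similarity} together with the implicit-function description of $L_x$ is used, to rule out any degeneration of $L_x$ inside the band: the band $L_x\cap\B_0$ stays away both from the axis and from the point where $L_x$ pinches at the origin (both require $r\le 1$), so no curvature blow-up occurs, in particular near the turning circle. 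Collecting the $O(1)$ pieces gives the bound $\lambda w^2$ at scale $n=0$, hence \eqref{estimate27} after undoing the scaling. The estimate for $\Cn_0$ is identical, the only difference being that one or both bounding curves may lie on the lateral face $\{r=2\}$ or on the end discs $\{x_1=1\}$, $\{x_1=2\}$, and that the band may be a smooth cap crossing the axis rather than an annular band.
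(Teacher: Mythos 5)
Your proposal is correct in substance and follows the same overall strategy as the paper's proof: decompose the curved band $L_x\cap\B_n$ (resp.\ $L_x\cap\Cn_n$) into finitely many pieces, each of which is a bounded-slope graph over a flat model (an annulus or disc in $\{x_1=\mathrm{const}\}$, or a cylinder), and rerun the integration-by-parts argument of Lemma \ref{lemma1} on each piece, the divergence-free structure surviving the change of variables. The implementation differs in four respects, which is worth recording. First, your scaling reduction to $n=0$ is a clean substitute for the paper's use of the similarity (\ref{similarity}): the paper works directly at scale $2^n$ and invokes (\ref{similarity}) plus an implicit-function argument to make the Jacobian bounds (\ref{boundjac}) uniform in the level, while you rescale once and for all; both are valid. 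Second, the cut is placed differently: the paper cuts into three fixed angular zones $Z_1^{x_0},Z_2^{x_0},Z_3^{x_0}$, so that the circle where the tangent plane of $L_x$ contains $\e_{x_1}$ lies strictly inside the middle zone, which is exactly the zone projected onto the cylinder; every zone then carries a uniformly nondegenerate projection by construction. You cut at that degenerate circle itself, and then the claim that each resulting piece admits one bounded-slope representation is not automatic: a piece whose closure contained both the turning circle and the pole on the axis (or reached too close to the pinch at the origin) would admit neither representation with uniform bounds. This bad case happens not to occur, but only because of the proportions of the blocks -- the pole sits at $x_1=1/\sqrt{x}$ while the turning circle sits at $x_1=3^{-3/4}/\sqrt{x}$, a ratio $3^{3/4}\approx 2.28$ exceeding the factor-$2$ depth of $\Cn_n$, and the pole never lies in $\B_n$ -- a verification your write-up omits and which the paper's angular cut makes unnecessary. (Incidentally, the turning circle lies on the cone $\{x_2^2+x_3^2=2x_1^2\}$, not on $\{x_1^2=x_2^2+x_3^2\}$ as the paper asserts; this is harmless for both arguments.) Third, your intrinsic splitting into a surface divergence, a normal term $u_n\partial_\nu u_n$ and a curvature term $(\un.\nabla\n).\un$ is equivalent to the paper's device of adapted coordinates with $\partial_y=\m$, in which the geometry is absorbed into the weighted divergence identity (\ref{jacdiv}); both require the same uniform bounds on the geometry of $L_x$ inside the blocks. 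Fourth, your explicit normalization $\un\mapsto\un-\un(\x_0)$ is a point where you are more careful than the paper: the flux through a non-closed surface piece is not invariant under adding a constant to $\un$, whereas $w$ is, so the bound (\ref{estimate27}) with $w^2$ on the right can only hold for the normalized field; the paper never makes this explicit in this proof (it appears only implicitly, via the moving frame in the proof of Corollary \ref{estcharge}), and your observation that only shift-invariant charges of closed regions enter the downstream argument is precisely the correct justification.
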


\begin{proof}
Let $\m:\mathbb R^3_+\rightarrow \mathbb R^3$ be the vector field orthogonal to the level sets $L_x$, $x \in (0,+\infty)$, with $|\m(\x)|=1$ $\forall \x \in \mathbb R^3_+$, and pointing into the direction of increasing values of $x$. 
Next fix $x_0 \in (0, +\infty)$, and 
subdivise $L_{x_0} $ into three zones $Z_1^{x_0},Z_2^{x_0},Z_3^{x_0}$ with
$Z_1^{x_0}:=\{\x \in L_{x_0} \,|\, \pi/3 <| \left\langle \x,\e_{x_1}\right\rangle|< \pi/2 \}$ , $Z_2^{x_0}:=\{\x \in L_{x_0} \,|\, \pi/6<|\left\langle \x,\e_{x_1}\right\rangle|<\pi/3\}$  
and $Z_3^{x_0}:=\{\x \in L_{x_0} \,|\, |\left\langle \x,\e_{x_1} \right\rangle|<\pi/6\}$, where $\left\langle .,.\right\rangle$ denotes the angle, and $\e_{x_1}:=(1,0,0)$. First we consider the case $i=1,3$. By an implicit function argument one shows that the projection $p^{(i,x_0)}:Z_i^{x_0}\rightarrow \mathbb R^2$, $\x=(z_1,x_2,x_3)\mapsto (x_2,x_3)$, $i=1,3$, is an immersion, i.e. $p^{(i,x_0)}(Z_i^{x_0})$ are coordinates on $Z_i^{x_0}$. In order to extend these coordinates on some neighborhood $W_i^{x_0}$ of $Z_i^{x_0}$ in $\mathbb R^3_+$, we complete $(x_1,x_2)$ by a third coordinate $y$, in such a way that $Z_i^{x_0}=\{y=0\}$  and $\partial_y (x_2,x_3,y)=\m$, $i=1,3$. 
Then the divergence-free condition in (\ref{ns}) for $\un=(u_{x_2}, u_{x_3}, u_y)$ becomes at any $\x \in Z_i^{x_0}$ 
\begin{equation} \label{jacdiv}
\partial_{x_2}(g^{(i,x_0)}u_{x_2})+\partial_{x_3}(g^{(i,x_0)}u_{x_3})+\partial_{y}(g^{(i,x_0)}u_y)=0, 
\end{equation}
where $g^{(i,x_0)}$ is the Jacobian determinant of the projection $p^{(i,x_0)}$, $i=1,3$. Moreover, by an implicit function argument and with (\ref{similarity}), one shows the existence of 
a constant $\alpha >0$, not depending on $x_0$, so that for any $\x \in Z_i^{x_0}$, $i=1,3$, 
\begin{equation} \label{boundjac}
  \alpha <|g^{(i,x_0)}(\x)| \leq  1.
\end{equation} 
Now let $n \in \mathbb Z$ and assume that $Z_1^{x_0} \cap \Cn_n \not=0$. Note that $p^{(i,x_0)}(Z_i^{x_0})$ is an annulus in $\mathbb R^2$ for $i=1$, and a disc for $i=3$. Arguing as in the proofs of Lemma \ref{lemma1} and Lemma \ref{cor1lemma1} (i), and taking into account (\ref{jacdiv}) and (\ref{boundjac}), one shows that there is a constant $\lambda_1 >0$ so that
\begin{equation} \label{estimatea}
 \left|\int_{Z_i^{x_0} \cap \Cn_n} (\un. \nabla \un).\m \,d\x   \right| \leq \lambda_1 2^n w^2(Z_i^{x_0} \cap \Cn_n), \quad i=1,3. 
\end{equation}
In the case $i=2$, the estimate
\begin{equation} \label{estimateb}
\left|\int_{Z_2^{x_0} \cap \Cn_n} (\un. \nabla \un).\m \,d\x   \right| \leq \lambda_2 2^n w^2(Z_2^{x_0} \cap \Cn_n)
\end{equation}
is shown arguing in the same way with the following difference: instead of projecting $Z_2^{x_0}\cap \Cn_n$ on the hyperplane $\{x_1=0\}$, one projects on the cylinder $C^n:=\{x_2^2+x_3^2= 4^n  \}$. At the end of the proof, instead of arguing as in the proof of item (i), one argues as in the proof of item (ii) of Lemma \ref{cor1lemma1}. 
Putting (\ref{estimatea}) and (\ref{estimateb}) together one has the second estimate in (\ref{estimate27}). Finally one proves the first estimate arguing in the same way.
\end{proof}

Next we introduce for any subset $A \subseteq \mathbb R$ the contribution $\partial_1 P_A(0)$ to  $\partial_1 P(0)$ of the charges located in $A$ by 
\begin{equation*}
\partial_1P_A(0):= \int_{A} \frac{-\nabla.(\un. \nabla \un)(\y)\y.\e_{x_1}}{|\y|^3}\, d\y.
\end{equation*}  

\begin{Lm} \label{lmclose}
There exists  a constant $\mu >0$ so that for any divergence-free vector field $\un:\mathbb R^3 \rightarrow \mathbb R^3$ in $C^1 \cap L^\infty$, for any $n \in \mathbb Z$ 
\begin{equation} \label{estlmclose}
|\partial_1P_{\B_n}(0)|\leq \mu 2^{-n}w^2(\B_n) \qquad \text{ and } \qquad |\partial_1P_{\Cn_n}(0)|\leq \mu 2^{-n}w^2(\Cn_n).  
\end{equation} 
\end{Lm}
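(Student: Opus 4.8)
The plan is to reorganise the contribution $\partial_1 P_{\B_n}(0)$ as an integral, over the values taken by $h$, of a partial charge, and thereby reduce matters to a single uniform estimate. I treat $\B_n$ in detail; the argument for $\Cn_n$ is word for word the same. For $x>0$ set $Q(x):=\int_{\B_n\cap\{h>x\}}\bigl(-\nabla.(\un.\nabla\un)\bigr)(\y)\,d\y$, the charge contained in the part of $\B_n$ on which $h>x$. Since $h\geq 0$ on $\mathbb R^3_+$, the layer-cake identity $h(\y)=\int_0^\infty \mathbb 1_{\{h(\y)>x\}}\,dx$ together with Fubini's theorem (legitimate since the integrand is bounded on the bounded region $\B_n$) gives
\[
\partial_1 P_{\B_n}(0)=\int_{\B_n}\bigl(-\nabla.(\un.\nabla\un)\bigr)(\y)\,h(\y)\,d\y=\int_0^\infty Q(x)\,dx.
\]

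A direct computation from the definition of $\B_n$ shows that $h$ takes values in an interval contained in $(0,\gamma 4^{-n})$ for an absolute constant $\gamma$ (indeed $|\y|\sim 2^n$ and $0\leq \y.\e_{x_1}\leq 2^n$ on $\B_n$); hence $Q(x)=0$ for $x>\gamma 4^{-n}$ and the integral runs over a range of length $\lesssim 4^{-n}$. It therefore suffices to establish the uniform bound $|Q(x)|\leq\lambda' 2^n w^2(\B_n)$, since then $|\partial_1 P_{\B_n}(0)|\leq \gamma 4^{-n}\sup_x|Q(x)|\leq \mu 2^{-n}w^2(\B_n)$.

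To bound $Q(x)$ I apply the divergence theorem to the subdomain $\B_n\cap\{h>x\}$, exactly as in the definition (\ref{defcharge}) of the charge, obtaining $Q(x)=-\int_{\partial(\B_n\cap\{h>x\})}(\un.\nabla\un).\n\,d\x$. The boundary decomposes into two kinds of faces. The first is the level-surface slice $L_x\cap\B_n$, whose flux is controlled by Lemma \ref{cor2lemma1} by $\lambda 2^n w^2(L_x\cap\B_n)\leq \lambda 2^n w^2(\B_n)$. The second consists of the portions of the faces of $\partial\B_n$ lying in $\{h>x\}$. Here I use the geometric fact recorded before Lemma \ref{cor2lemma1}, namely that $L_x$ meets $\partial\B_n$ in at most two circles centred on the $x_1$-axis: cutting by $L_x$ therefore turns each planar face into a disc or an annulus and each cylindrical face into a sub-cylinder, all of radius $\sim 2^n$ and comparable height, to which Corollary \ref{cor1lemma1} applies and yields a flux $\leq\lambda 2^n w^2(\B_n)$. (When $x$ lies below the range of $h$ the slice is empty and $Q(x)=C(\B_n)$ is controlled directly by Corollary \ref{estcharge}.) Summing over the finitely many faces gives $|Q(x)|\leq\lambda' 2^n w^2(\B_n)$ uniformly in $x$, which closes the argument.

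I expect the uniform control of $Q(x)$ to be the heart of the matter: it is exactly the statement that the charges of $\B_n$ form no strong dipole. One cannot simply extract $\max_{\B_n}h\sim 4^{-n}$ from the integral and invoke the total charge $C(\B_n)$, because the signed density $-\nabla.(\un.\nabla\un)$ may carry large cancelling parts whose total variation vastly exceeds $|C(\B_n)|$. The layer-cake reorganisation replaces this by control of the flux across the moving level surface $L_x$; the delicate input, supplied by Lemma \ref{cor2lemma1} through the adapted coordinates along $\m$ and the Jacobian bounds (\ref{boundjac}), is that this curved slice contributes no more than a flat slice of comparable diameter. By contrast, the cut faces of the block are routine once their disc/annulus/cylinder structure is observed.
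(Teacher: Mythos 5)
Your proposal is correct and follows essentially the same route as the paper's proof: both foliate the block by the level sets $L_x$ of $h$, bound the charge of the region cut off by a level set via the divergence theorem (using Lemma \ref{cor2lemma1} for the curved slice and Lemma \ref{cor1lemma1} for the disc/annulus/cylinder pieces of $\partial\B_n$, $\partial\Cn_n$), and then exploit that the range of $h$ on the block has length $O(4^{-n})$. The only difference is cosmetic: you organize the computation by the layer-cake identity $\partial_1P_{\B_n}(0)=\int_0^\infty Q(x)\,dx$ with $Q$ the charge of the super-level region, while the paper writes $\partial_1P_{\Cn_n}(0)=\int x f(x)\,dx$ with the sub-level charge $F$ and integrates by parts; since $Q=C(\B_n)-F$, these are the same identity.
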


\begin{proof}
(i) Let $n \in \mathbb Z$ and set $h(\Cn_n):=(c_1^{(n)},c_2^{(n)})$. For any $x \in (c^{(n)}_1,c^{(n)}_2)$ set $L^{(n)}_x:=L_x \cap \Cn_n$. Note that there exists two constants $\gamma_1, \gamma_2 >0$ so that $\frac{\gamma_1}{4^{n}}\leq c_1^{(n)} < c_2^{(n)} \leq \frac{\gamma_2}{4^n} $. By Lemma \ref{cor2lemma1} there exists a constant $\lambda_1>0$ so that for any $x \in (c_1^{(n)},c_2^{(n)})$ one has
\begin{equation} \label{finaout}
\left|\int_{L_x^{(n)}} (\un. \nabla \un).\n \,d\x \right| \leq  \lambda_1 2^{n} w^{2}(\Cn_n).
\end{equation}
Next, for any $x \in (c_1^{(n)},c_2^{(n)})$ set $f(x):=\partial_x F(x)$ with $ \displaystyle F(x):= C\left(\cup_{c_1^{(n)} \leq y \leq x}L_y^{(n)}\right)$ and note that 
\begin{equation*} 
\partial_1P_{\Cn_n}(0)= \int_{c_1^{(n)}}^{c_2^{(n)}}x f(x) \,dx.  
\end{equation*} 
Then, integrating by parts, one obtains
\begin{equation} \label{partintegr}
\partial_1P_{\Cn_n}(0)= \int_{c_1^{(n)}}^{c_2^{(n)}}x f(x) \,dx= \left[xF\right]_{c_1^{(n)}}^{c_2^{(n)}}-\int_{c_1^{(n)}}^{c_2^{(n)}}F(x) \,dx.
\end{equation}
To estmate $F(x)$, i.e. the charge inside $\cup_{c_1^{(n)} \leq y \leq x}L_y^{(n)} \subseteq \Cn_n$, note that the boundary
$\partial \left(\cup_{c_1^{(n)} \leq y \leq x}L_y^{(n)} \right)$ is the union of $L_x^{(n)}$, one or two annuli 
inside the disc $\{ x_1=2^n, x^2_2+x_3^2\leq 4^{n+1}\}$ resp $\{ x_1=2^{n+1}, x^2_2+x_3^2 \leq 4^{n+1} \}$, and a cylinder inside $\{x^2_2+x_3^2=4^{n+1} \}$. 
Then by (\ref{finaout}), Lemma \ref{cor1lemma1} and (\ref{defcharge}) there is a constant $\lambda_2 >0$ so that for any intervall $(c_1^{(n)},x]\subseteq (c_1^{(n)},c_2^{(n)})$
\begin{equation*} 
\left| F(x) \right| \leq \lambda_2 w^2(\Cn_n) 2^{n}.  
\end{equation*} 
Together with (\ref{partintegr}) one finally gets
\begin{equation*}
|\partial_1P_{\Cn_n}(0)|\leq 2(c_2^{(n)}-c_1^{(n)})\lambda_2 w^2(\Cn_n) 2^{n} \leq 2 \frac{\gamma_2-\gamma_1}{4^{n}}w^2(\Cn_n) 2^{n}:=\mu2^{-n}w^2(\Cn_n).
\end{equation*}
The first estimate in (\ref{estlmclose}) is proved arguing in the same way.

\end{proof}

\begin{proof}[Proof of Theorem \ref{thm2} ]

In view of the definition of the norms one has $|\un(\x)-\un(x')| \leq 3|\nabla \un|_\infty|\x-\x'|$.
Let $n_0 \in \mathbb Z$ be the integer so that  
\begin{equation} \label{defn0}
2^{n_0-1} 3|\nabla \un|_\infty  <  |\un|_\infty \leq 2^{n_0}3|\nabla \un|_\infty. 
\end{equation}
With $\sup_{\x,\x' \in \B_n}|\x-\x'| \leq 2^{n+3}$
and $\sup_{\x,\x' \in \Cn_n}|\x-\x'| \leq 2^{n+3}$ one then has
\begin{equation*}
w(\B_n)<  3 |\nabla \un |_\infty  2^{n+3},  \text{ if } n <n_0 \quad \text{and}
        \quad       w(\B_n)<   | \un |_\infty, \text{ if } n \geq n_0.
\end{equation*}
With Lemma \ref{lmclose} it follows that there exists $\gamma >0$
so that
\begin{equation*}
|\partial_1P_{\B_n}(0)|\leq  \gamma 2^n |\nabla \un|_\infty^2,\,\,n < n_0, \qquad \text{and} \qquad |\partial_1P_{\B_n}(0)|\leq \gamma 2^{-n} |\un |_\infty^2,\,\, n \geq n_0. 
\end{equation*} 
By the left hand side inequality in (\ref{defn0}) one has $\displaystyle \sum_{n <n_0}|\partial_1P_{\B_n}(0)|\leq \frac{2}{3}\gamma |\un|_\infty |\nabla \un|_\infty $, 
whereas the right hand side inequality in (\ref{defn0}) leads to
\begin{equation*} 
\sum_{n \geq n_0}|\partial_1P_{\B_n}(0)| \leq \sum_{n \geq 0} 3\gamma 2^{-n} |\un|_\infty |\nabla \un|_{\infty} \leq 6 \gamma |\un|_\infty |\nabla \un|_{\infty}.
\end{equation*}
Arguing in the same way one obtains similar estimates for $\partial_1P_{\Cn_n}(0)$, $n \in \mathbb Z$. In total 
\begin{eqnarray*}
|\Delta P.\e| &\leq& \sum_{n <n_0}|\partial_1P_{\B_n}(0)|+ |\partial_1P_{\Cn_n}(0)|+\sum_{n \geq n_0}|\partial_1P_{\B_n}(0)|+|\partial_1P_{\Cn_n}(0)| \\
               & & +(\text{same contribution from }\mathbb R^3_-) \\
               &\leq&  \frac{80}{3} \gamma  |\un|_\infty |\nabla \un|_{\infty}.
\end{eqnarray*}
\end{proof}

\section{Friction}  \label{friction}

\begin{Lm} \label{lmheat}
There exists a constant $\alpha >0$ so that for any $L^\infty$-function
$f: \Bbb R^3 \rightarrow \Bbb R$, and any real number $b >0$,
and any time $t\geq \alpha |f|_\infty^2/b^2$
\begin{equation*}
| \nabla (e^{t\Delta}f) |_\infty \leq b.
\end{equation*}
\end{Lm}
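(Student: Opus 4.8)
The plan is to realise the heat semigroup as convolution with the Gaussian kernel and then reduce everything to the $L^1$-norm of the gradient of that kernel, which can be evaluated by a scaling argument. Write $e^{t\Delta}f = G_t * f$, where $G_t(\x) := (4\pi t)^{-3/2}\exp(-|\x|^2/(4t))$ is the heat kernel associated with $\partial_t = \Delta$. Since $G_t$ is smooth and integrable together with all its derivatives, differentiation passes through the convolution, so $\nabla(e^{t\Delta}f) = (\nabla G_t) * f$. Young's convolution inequality then yields the pointwise bound
\begin{equation*}
|\nabla(e^{t\Delta}f)|_\infty \leq |f|_\infty\, \|\nabla G_t\|_{L^1},
\end{equation*}
which decouples the (arbitrary, merely bounded) datum $f$ from the kernel.

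The next step is to pin down the $t$-dependence of $\|\nabla G_t\|_{L^1}$. From $\nabla G_t(\x) = -\tfrac{\x}{2t}\,G_t(\x)$ one gets $|\nabla G_t(\x)| = \tfrac{|\x|}{2t}(4\pi t)^{-3/2}\exp(-|\x|^2/(4t))$, and the change of variables $\x = \sqrt t\,\y$ turns the integral into $\kappa\, t^{-1/2}$, where $\kappa := (4\pi)^{-3/2}\tfrac12\int_{\mathbb R^3}|\y|\,e^{-|\y|^2/4}\,d\y$ is a finite universal constant. Equivalently, this $t^{-1/2}$ law is forced by the parabolic scaling symmetry $G_t(\x) = t^{-3/2}G_1(\x/\sqrt t)$, so the only input beyond scaling is the finiteness $\|\nabla G_1\|_{L^1} < \infty$.

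Combining the two estimates gives $|\nabla(e^{t\Delta}f)|_\infty \leq \kappa\, |f|_\infty\, t^{-1/2}$. Requiring the right-hand side to be at most $b$ is equivalent to $t \geq \kappa^2 |f|_\infty^2 / b^2$, so the lemma follows with the choice $\alpha := \kappa^2$.

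There is no genuine obstacle here: the statement is exactly the standard smoothing estimate for the heat flow, and its whole content is the $t^{-1/2}$ gain of one derivative. The only point requiring a line of care is matching the normalisation conventions for the norm $|\cdot|_\infty$: if $|\nabla(\cdot)|_\infty$ is read componentwise as $\max_j \sup_\x |\partial_j(e^{t\Delta}f)(\x)|$, one simply bounds each component by $|f|_\infty\|\partial_j G_t\|_{L^1} \leq |f|_\infty\|\nabla G_t\|_{L^1}$, and the same constant $\alpha$ works.
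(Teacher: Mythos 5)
Your proof is correct and takes essentially the same route as the paper: both reduce the lemma to the standard parabolic smoothing estimate $|\nabla (e^{t\Delta}f)|_\infty \leq \kappa\, t^{-1/2}|f|_\infty$ and then solve this inequality for $t$ to get the threshold $\alpha |f|_\infty^2/b^2$. The only difference is that the paper simply cites this estimate as well known (Lemma 2.5 of the Gallagher--Gallay notes), whereas you supply its standard proof via Young's inequality and the scaling of the Gaussian kernel, making the argument self-contained.
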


\begin{proof}
The Lemma is a consequence of the following well known property: there exists $\tilde\alpha >0$ so that
\begin{equation*}
|\partial_i e^{t\partial^2_x}h|_\infty \leq \tilde\alpha t^{-1/2}|h|_\infty, \quad i=1,2,3
\end{equation*}
for any $t \geq 0$ (see for instance Lemma 2.5 in \cite{GG}).
\end{proof}

\begin{Lm} \label{lmduhamel}
There exists a constant $\gamma >0$ so that the following property holds.
Let $t_0, t_+ \in \mathbb R_{\geq 0}$,
$\q(t)\equiv \q(\x,t) \in C^1(\mathbb R^3 \times [t_0, t_0+t_+], \mathbb R^3 )$ be a time dependent vector field, and $c>0$ be so that $|\q|_\infty< c $ for any $t \in [t_0, t_0+t_+]$. For any $t \in [t_0, t_0+t_+]$ introduce the vector field $\w(t)  \equiv \w (\x,t)$ given by
\begin{equation} \label{dew}
 \w(t):=\int_{t_0}^{t} e^{(t-s)\Delta} \q(s)\, ds.
\end{equation}
Then for any  $t \in [t_0, t_0+t_+]$ one has
(i) $ |\w(\cdot, t)|_\infty \leq ct$ and (ii) $ |\nabla \w(\cdot,t) |_\infty \leq \gamma c \sqrt{t-t_0}$. 
\end{Lm}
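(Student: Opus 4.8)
The plan is to read both bounds directly off the Duhamel representation (\ref{dew}) by moving the $L^\infty$-norm inside the time integral (Minkowski's integral inequality, applied componentwise) and then invoking the two elementary mapping properties of the heat semigroup on $L^\infty(\mathbb R^3)$: that $e^{\tau\Delta}$ is an $L^\infty$-contraction, and that it gains one derivative at the cost of a factor $\tau^{-1/2}$. The only genuine content is that the resulting singularity in $s$ is integrable.

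For (i) I would write $w_i(t)=\int_{t_0}^t e^{(t-s)\Delta}q_i(s)\,ds$ for each component $q_i$ of $\q$. Since the heat kernel is a positive probability density, $e^{\tau\Delta}$ maps $L^\infty$ into itself with norm $1$, so $|e^{(t-s)\Delta}q_i(s)|_\infty\le |q_i(s)|_\infty< c$. Integrating in $s$ over $[t_0,t]$ and taking the maximum over $i$ gives $|\w(\cdot,t)|_\infty\le c(t-t_0)\le ct$, which is (i); note the bound $c(t-t_0)$ is in fact sharper, and $t_0\ge 0$ is what lets us replace it by $ct$.

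For (ii) I would differentiate under the integral, $\partial_k w_i(t)=\int_{t_0}^t \partial_k e^{(t-s)\Delta}q_i(s)\,ds$, and apply the smoothing estimate already recalled in the proof of Lemma~\ref{lmheat}, namely $|\partial_k e^{\tau\Delta}h|_\infty\le\tilde\alpha\,\tau^{-1/2}|h|_\infty$. With $|q_i(s)|_\infty< c$ this gives $|\partial_k e^{(t-s)\Delta}q_i(s)|_\infty\le \tilde\alpha c\,(t-s)^{-1/2}$, and the key computation is the explicit integral
\[
\int_{t_0}^t (t-s)^{-1/2}\,ds=\big[-2(t-s)^{1/2}\big]_{s=t_0}^{s=t}=2\sqrt{t-t_0}.
\]
Hence $|\partial_k w_i(t)|_\infty\le 2\tilde\alpha c\sqrt{t-t_0}$, and taking the maximum over $i,k$ yields (ii) with $\gamma:=2\tilde\alpha$.

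The point to watch — rather than a real obstacle — is the integrability of the $(t-s)^{-1/2}$ factor at the upper endpoint $s=t$: the exponent $1/2<1$ is precisely what makes the integral converge and produces the $\sqrt{t-t_0}$ gain, so that $\nabla\w$ is controlled even though no bound on $\nabla\q$ is assumed. The interchange of differentiation with the integral, and of the norm with the integral, is justified by the $C^1$-smoothness of $\q$ together with the uniform bound $|\q|_\infty<c$ on the compact interval $[t_0,t_0+t_+]$.
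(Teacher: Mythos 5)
Your proof is correct and rests on exactly the same two ingredients as the paper's: the $L^\infty$-contraction of $e^{\tau\Delta}$ for (i), and the smoothing estimate $|\partial_k e^{\tau\Delta}h|_\infty\le\tilde\alpha\,\tau^{-1/2}|h|_\infty$ together with the integrability of the $(t-s)^{-1/2}$ singularity for (ii). The only difference is presentational: the paper discretizes the Duhamel integral into Riemann sums and bounds the resulting sum $\sum_{k=1}^n k^{-1/2}$ by comparison with an integral, whereas you integrate $\int_{t_0}^t(t-s)^{-1/2}\,ds=2\sqrt{t-t_0}$ directly, which is the cleaner route to the same constant.
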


\begin{rem}
No assumption is made on $\nabla \q$.
\end{rem}

\begin{proof}
Let $t \in [t_0, t_0+t_+]$, $n \in \Bbb Z_{\geq 0}$, and set $h_n:= (t-t_0)/n$. Then the right hand side of
(\ref{dew}) is the limit of the Riemann sum
\begin{equation*}
\w(t)= \lim_{n\rightarrow \infty}\sum_{k=1}^n e^{k h_n \Delta}\q(t_0+(n-k)h_n) h_n
\end{equation*}
and (i) is immediate. Setting $\w:=(w_1,w_2,w_3)$ resp $\q:=(q_1,q_2,q_3)$ one has similarly
\begin{equation} \label{riemann}
\nabla w_i (t)= \lim_{n \rightarrow \infty} \sum_{k=1}^n e^{kh_n \Delta} \nabla q_i(t_0+(n-k)h_n) h_n,
\end{equation}
$i=1,2,3$.
By assumption we have $|f|_\infty \leq c\,h_n$, $f:=h_n q_i(t_0+(n-k)h_n)$. Then by Lemma \ref{lmheat} there exists a constant $\gamma>0$ so that for any time $t_0+s$ with $s >0$ one has 
$|\nabla f|_\infty \leq \frac{\gamma c \,h_n}{\sqrt{s}}$.
Using that the operators $\nabla$ and $e^{t \Delta}$ commute one then has for $s=kh_n$
\begin{equation*}
|e^{kh_n \Delta} \nabla q_i(t_1+(n-k)h_n) h_n|_\infty \leq  \frac{\gamma c \sqrt{h_n}}{\sqrt{k}}.
\end{equation*}
Substituting this estimate in (\ref{riemann}) leads to 
\begin{eqnarray*}
|\nabla w_i (t)|_\infty &\leq&  \lim_{n \rightarrow \infty} \sum_{k=1}^n  \frac{\gamma c \sqrt{h_n}}{\sqrt{k}} 
                 \leq  \gamma c \sqrt{t-t_0} \lim_{n \rightarrow \infty}\frac{1}{\sqrt{n}}\sum_{k=1}^n \frac{1}{\sqrt{k}} \\
                 &\leq & \gamma c \sqrt{t-t_0} \lim_{n \rightarrow \infty}\frac{1}{\sqrt{n}} \int_1^n \frac{2}{\sqrt{x}}dx 
                 \leq  \gamma c \sqrt{t-t_0} \lim_{n \rightarrow \infty}\frac{1}{\sqrt{n}} (\sqrt{n}-1) 
                 \leq  \gamma c \sqrt{t-t_0}.  
\end{eqnarray*}
\end{proof}

\begin{Lm}  \label{lm43} 

There exist constants $\kappa, \tau, \gamma_0, \gamma_1 >0$ so that for any classical solution $(\un,P)$ of (\ref{ns}), any $b>0$, any arbitrarily large $n\in \mathbb Z$ with $b < 2^n$, any $t_0 \geq 0$ so that $|\un(\cdot, t)|_\infty< b$ for any $t_0 \leq t \leq t_0+\tau/4^{n}$, and so that 
$|\nabla \un(\cdot,t_0)|_\infty < 2^n\kappa b$,
one has 
\begin{itemize}
\item[(i)]
$\displaystyle \quad
| \nabla \w(\cdot,t)  |_\infty < 2^{n-2} \kappa b$ for any $t_0\leq t \leq t_0+\tau/4^{n}$,  with\\ $\text{}   \,\,\,\, $ $\displaystyle \w(t):=\int_{t_0}^t e^{(t-s)\Delta}\left[\un. \nabla \un+ \nabla P\right](t_0+s)\, ds $;
\item[(ii)] $\displaystyle \quad
|\nabla \un |_\infty < 2^{n-1} \kappa b $
for any $t_0+\tau/(2 \cdot 4^{n}) \leq t \leq t_0+ \tau/ 4^{n}$;

\item[(iii)] $|\w(\cdot,t)|_\infty \leq  \frac{9}{8}\gamma_0 \kappa 2^n b^2 (t-t_0)$ for any $t_0\leq t \leq t_0+\tau/4^{n}$;

\item[(iv)] $|\nabla \w(\cdot,t)|_\infty \leq  \frac{9}{8} \gamma_1 \kappa 2^n b^2 \sqrt{t-t_0}$ for any $t_0\leq t \leq t_0+\tau/4^{n}$.

\end{itemize}
\end{Lm}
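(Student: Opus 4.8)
The plan is to put (\ref{ns}) with $\nu=1$ into Duhamel (mild) form around the time $t_0$ and to split $\nabla\un$ into a linear heat part and the nonlinear/pressure part $\nabla\w$. Writing the equation as $\partial_t\un-\Delta\un=-(\un.\nabla\un+\nabla P)$, one has $\un(t)=e^{(t-t_0)\Delta}\un(\cdot,t_0)-\w(t)$ with $\w$ as in (i). The crucial input is that the forcing $\q:=\un.\nabla\un+\nabla P$ is controlled by $|\un|_\infty$ and $|\nabla\un|_\infty$ alone: the convection term obeys $|\un.\nabla\un|_\infty\leq 3|\un|_\infty|\nabla\un|_\infty$, while Theorem \ref{thm2} gives $|\nabla P|_\infty<\beta|\nabla\un|_\infty|\un|_\infty$, so $|\q|_\infty\leq(3+\beta)|\un|_\infty|\nabla\un|_\infty$. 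This is the only place where the structure of the pressure enters the argument.

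Since $|\nabla\un|_\infty$ is assumed small only at $t_0$, I would run a continuity (bootstrap) argument. Set $M:=2^n\kappa b$ and let $T^\ast$ be the supremum of the times $t\leq t_0+\tau/4^n$ for which $|\nabla\un(\cdot,s)|_\infty<\frac{9}{8}M$ on $[t_0,t]$; the hypothesis $|\nabla\un(\cdot,t_0)|_\infty<M$ and continuity in time give $T^\ast>t_0$. On $[t_0,T^\ast)$ the standing bound $|\un|_\infty<b$ together with the bootstrap bound yields $c:=\sup|\q|_\infty\leq\frac{9}{8}(3+\beta)\kappa 2^n b^2$. Feeding this $c$ into Lemma \ref{lmduhamel}(i),(ii) produces exactly (iii) and (iv), with $\gamma_0:=3+\beta$ and $\gamma_1:=\gamma(3+\beta)$; thus (iii) and (iv) hold for free on $[t_0,T^\ast)$, and the factor $\frac{9}{8}$ appearing there is precisely the bootstrap slack.

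Next I would extract (i) from (iv) and close the bootstrap. On the interval $t-t_0\leq\tau/4^n$, so $\sqrt{t-t_0}\leq\sqrt\tau\,2^{-n}$; substituting in (iv) and using the standing assumption $b<2^n$ gives $|\nabla\w(\cdot,t)|_\infty\leq\frac{9}{8}\gamma_1\kappa\sqrt\tau\,b^2<2^{n-3}\kappa b$ as soon as $\tau\leq(9\gamma_1)^{-2}$, which is even stronger than (i). For the linear part I use that $\nabla$ commutes with $e^{(t-t_0)\Delta}$ and that $e^{(t-t_0)\Delta}$ contracts $L^\infty$, so $|\nabla e^{(t-t_0)\Delta}\un(\cdot,t_0)|_\infty\leq|\nabla\un(\cdot,t_0)|_\infty<M$; adding the two contributions gives $|\nabla\un(\cdot,t)|_\infty<M+2^{n-3}\kappa b=\frac{9}{8}M$ strictly. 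By continuity this forces $T^\ast=t_0+\tau/4^n$, so (i), (iii), (iv) hold on the whole interval.

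Finally, (ii) comes from replacing the crude $L^\infty$-contraction of the linear part by the smoothing of the heat semigroup. Applying Lemma \ref{lmheat} to $f=\un(\cdot,t_0)$ with target level $b'=2^{n-2}\kappa b$ shows $|\nabla e^{(t-t_0)\Delta}\un(\cdot,t_0)|_\infty\leq 2^{n-2}\kappa b$ once $t-t_0\geq\alpha|f|_\infty^2/(b')^2$, i.e. once $t-t_0\geq 16\alpha/(4^n\kappa^2)$; choosing $\tau\geq 32\alpha/\kappa^2$ guarantees this for all $t\geq t_0+\tau/(2\cdot4^n)$, and combining with (i) gives $|\nabla\un(\cdot,t)|_\infty<2^{n-2}\kappa b+2^{n-2}\kappa b=2^{n-1}\kappa b$. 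The one genuine difficulty is the simultaneous calibration of the constants: (i) and the closing of the bootstrap want $\tau$ small (of order $\gamma_1^{-2}$), whereas the decay of the linear part in (ii) wants $\tau$ large (of order $\alpha/\kappa^2$). These windows overlap only if $\kappa^2\gtrsim\alpha\gamma_1^2$, so one must fix $\kappa$ large first and only then pick $\tau$ in the resulting interval; since $\beta$, $\gamma$, and $\alpha$ (from Theorem \ref{thm2}, Lemma \ref{lmduhamel}, Lemma \ref{lmheat}) are absolute, the resulting $\kappa,\tau,\gamma_0,\gamma_1$ are absolute as required.
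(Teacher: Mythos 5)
Your proposal is correct and takes essentially the same route as the paper: the same splitting $\un(t)=e^{(t-t_0)\Delta}\un(\cdot,t_0)-\w(t)$, the bound $|\un.\nabla\un+\nabla P|_\infty\leq\frac{9}{8}(3+\beta)\kappa 2^n b^2$ via Theorem \ref{thm2}, Lemma \ref{lmduhamel} for (iii)--(iv), Lemma \ref{lmheat} for the linear part in (ii), and a continuity (bootstrap) argument with slack $\frac{9}{8}$. The only harmless deviations are that you close the bootstrap in one step using the $L^\infty$-contraction of the heat semigroup on the whole interval (the paper instead splits the interval into a first quarter and the remainder, using heat decay on the latter), and that you fix $\kappa$ large first and then choose $\tau$ in a window, whereas the paper fixes $\tau$ small first and then takes $\kappa\geq 4\sqrt{2}\gamma_2/\sqrt{\tau}$; both calibrations work since all constants involved are absolute.
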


\begin{proof}

Let $\kappa,\tau >0$ be arbitrary, and assume that for some $b >0$, $n \in \mathbb Z$ and $t_0 \geq 0$ one has $|\un(\cdot,t)|_\infty \leq b$ for any $t_0 \leq t \leq t_0+\tau/4^{n}$ and $|\nabla \un(\cdot,t_0)|_\infty < 2^n\kappa b$. To prove Lemma \ref{lm43} we want to apply Lemma \ref{lmheat} to $\tilde\un(t):=e^{(t-t_0)\Delta}\un(t_0)$ and Lemma \ref{lmduhamel} with
\begin{equation*}
\q:= \un.\nabla \un + \nabla P \quad \text{and}\quad \w(t):=\int_{t_0}^t e^{(t-s)\Delta} \q(t_0+s)ds
\end{equation*}
where $P\equiv P(\un)$ denotes the pressure field given by (\ref{PoisP}).
First, let $t_+>0$ be a real number which is small enough so that $|\nabla \un(\cdot,t) |_\infty \leq \frac{9}{8}2^n\kappa b$ for any $t_0 \leq t \leq t_0+t_+$. By 
Theorem \ref{thm2} there exists a constant $\gamma_0 >0$ so that $|(\un. \nabla \un +\nabla P)(\cdot,t)|_\infty \leq \gamma_0 \frac{9}{8} 2^n\kappa b^2$ for any $t_0 \leq t \leq t_0+t_+$.
Lemma \ref{lmduhamel} then says that there is a constant $\gamma_1 >0$ so that for any $t_0\leq t \leq t_0+t_+$
\begin{equation} \label{ineqlm43}
|\w(\cdot,t)|_\infty \leq  \gamma_0\frac{9}{8}  2^n \kappa b^2 (t-t_0) \quad \text{ and } \quad |\nabla \w(\cdot,t)|_\infty \leq  \gamma_1\frac{9}{8} 2^n \kappa b^2 \sqrt{t-t_0}.
\end{equation}
Next we assume $\tau>0$ so that $\frac{9}{8} \gamma_1 \sqrt{\tau} \leq \frac{1}{4}$ and assume $t_+\geq \tau /4^{n} $. Then by the second inequality in (\ref{ineqlm43}) one has $|\nabla w(\cdot,t)|\infty < \frac{\kappa}{4}b^2$ for any $t_0\leq t \leq t_0+\tau/4^{n}$, i.e. with $b \leq 2^n$
\begin{equation} \label{ineq3lm43}
| \nabla \w(\cdot,t)  |_\infty < 2^{n-2} \kappa b
\end{equation}
for any $t_0\leq t \leq t_0+\tau/4^{n}$.
On the other hand, by Lemma \ref{lmheat}, there is a constant $\gamma_2 >0$ so that for any $t> t_0$
\begin{equation} \label{ineq2lm43}
|\nabla \tilde \un(\cdot,t) |_\infty <  \frac{\gamma_2 b}{\sqrt{t-t_0}}.
\end{equation}
Assume  $\kappa>0$ large enough so that 
$\frac{
\gamma_2 b}{ \sqrt{\tau/(2\cdot 4^{n})}} \leq 2^{n-2}\kappa b$, i.e. $ \frac{4\sqrt{2}\gamma_2 }{ \sqrt{\tau}} \leq \kappa $. 
Then Duhamel's principle, (\ref{ineq2lm43}), (\ref{ineq3lm43}), and the fact that $|\nabla \tilde \un(\cdot,t) |_\infty  $ is a decreasing function in time, ensure that
\begin{equation} \label{inviewof}
|\nabla \un(\cdot,t) |_\infty < 2^{n-1}\kappa  b
\end{equation}
for any $t_0+\tau/(2\cdot 4^{n}) \leq t \leq t_0+\tau/ 4^{n}$. It remains to prove that $t_+$ can be taken larger or equal than $\tau/4^{n} $, i.e. to show that for any $t_0 \leq t \leq \tau/4^{n}$,
\begin{equation} \label{toinsureend}
|\nabla \un(\cdot,t) |_\infty \leq \frac{9}{8} 2^n \kappa b.
\end{equation}
To this end it suffices to note that for any $t_0 \leq t \leq t_0+\tau/(4\cdot 4^n)$, by the second inequality in (\ref{ineqlm43}) and $b \leq 2^n$, one has
$|\nabla \w(\cdot,t)|_\infty \leq \frac{1}{8}2^n \kappa b$, 
whereas by the fact that $|\nabla \tilde \un(\cdot,t) |_\infty  $ is a decreasing function in time one has $|\nabla \tilde \un(\cdot,t) |_\infty \leq \kappa 2^n b$. Then Duhamel's principle ensures that (\ref{toinsureend}) holds for any $t_0 \leq t \leq t_0+\tau/(4\cdot 4^n)$. Finally,  for any $t_0+\tau/(4\cdot 4^n)\leq t \leq t_0+\tau/ 4^n$ one has $|\nabla \tilde \un(\cdot,t) |_\infty \leq \frac{1}{2}\kappa b$. Then Duhamel's principle together with (\ref{ineq3lm43}) ensures that (\ref{toinsureend}) also holds for any $t_0+\tau/(4\cdot 4^n)\leq t \leq t_0+\tau/ 4^n$. 
\end{proof}

In the sequel we assume that all solutions $(\un, P)$ of (\ref{ns}) we consider have initial data $\un^0$ in $ L^3(\mathbb R^3) \cap L^\infty(\mathbb R^3)$. Recall the solution $\un \in L^{3,\infty}(\mathbb R^3 \times [0,t_b])$ generated by $\un^0$ is smooth, where
$t_b$ is its blow-up time.

\begin{Lm}  \label{cadregeneral}
There exists a constant $ \theta >0$ so that for any solution $\un\equiv \un(\x,t) $ of (\ref{ns}) as above, for any $t_0 \in [0,t_b)$  and for any $n \in \mathbb Z$ with $|\un(\cdot,t_0)|_\infty < \theta 2^{n}$, there exists $t_1,t_2$ with 
$t_0<t_1 \leq t_0+\frac{\tau}{6} 4^{-n+1}$ and $t_1+\frac{\tau}{2} 4^{-n+1}\leq t_2 < t_b$, where $\tau$ is the constant in Lemma \ref{lm43} so that
\begin{equation*}
(i)\, |\un(\cdot,t)|_\infty \leq \frac{9}{8}|\un(\cdot,t_0)|_\infty \,\,\, \forall \,t_0\leq t \leq t_2 \quad \text{and} \quad (ii)\,|\nabla \un(\cdot, t)|_\infty < \eta 4^n \,\,\, \forall\,t_1 \leq t\leq t_2,
\end{equation*}
with $\eta:=\theta \kappa$, where $\kappa>0$ is the constant in  Lemma \ref{lm43}.
\end{Lm}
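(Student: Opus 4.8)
The plan is to use Lemma \ref{lm43} as a building block in two phases: an initial \emph{descent phase} that brings the velocity gradient down to the scale $2^n\kappa b$ required to enter the hypotheses of Lemma \ref{lm43}, and then a \emph{maintenance phase} that keeps the gradient controlled while $|\un|_\infty$ grows by at most the factor $9/8$. Throughout I set $b:=\tfrac98|\un(\cdot,t_0)|_\infty$, so that $b<\tfrac98\theta 2^n$ and, for $\theta$ small, $b<2^n\le 2^j$ for every $j\ge n$. I run everything as a single continuation (bootstrap) argument on $[t_0,\min(t_2,t_b))$ in which (i) and (ii) are closed simultaneously: the gradient estimates need $|\un|_\infty<b$ on each window (the hypothesis of Lemma \ref{lm43}), while the control of the growth of $|\un|_\infty$ needs the gradient bound in order to estimate $\q=\un.\nabla\un+\nabla P$ through Theorem \ref{thm2}.

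For the descent, assuming first that $\nabla\un(\cdot,t_0)$ is finite, I pick the integer $n'\ge n$ with $|\nabla\un(\cdot,t_0)|_\infty<2^{n'}\kappa b$ and apply Lemma \ref{lm43} successively with parameters $n',n'-1,\dots,n+1$: each step uses conclusion (ii) to halve the gradient scale on the latter half of a window of length $\tau 4^{-j}$, so that after the step with parameter $j$ one has gradient $<2^{j-1}\kappa b$ at the starting time of the next step. The starting times increase by $\tfrac{\tau}{2}4^{-j}$, so the total elapsed time is at most $\tfrac{\tau}{2}\sum_{j\ge n+1}4^{-j}=\tfrac{\tau}{6}4^{-n}$, and at its end, a time $t_1^-<t_0+\tfrac{\tau}{6}4^{-n}$, one has $|\nabla\un(\cdot,t_1^-)|_\infty<2^{n}\kappa b$. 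One further application of Lemma \ref{lm43} with parameter $n$ then gives, via (ii), $|\nabla\un|_\infty<2^{n-1}\kappa b$ on $[t_1^-+\tfrac{\tau}{2}4^{-n},\,t_1^-+\tau 4^{-n}]$; I set $t_1:=t_1^-+\tfrac{\tau}{2}4^{-n}$, so that $t_1<t_0+\tfrac{\tau}{6}4^{-n+1}$ as required and $|\nabla\un(\cdot,t_1)|_\infty<2^{n-1}\kappa b$.

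For the maintenance phase I cover $[t_1,t_2]$ by the overlapping windows $[t_a,t_a+\tau 4^{-n}]$, $t_a\ge t_1^-$, advancing the start by the half-length $\tfrac{\tau}{2}4^{-n}$. By induction the gradient is $<2^n\kappa b$ at each $t_a$, so Lemma \ref{lm43}(ii) applies and yields $|\nabla\un|_\infty<2^{n-1}\kappa b$ on the latter half of each window; since every $t\ge t_1$ is the left endpoint of the latter half of the window started at $t-\tfrac{\tau}{2}4^{-n}\ge t_1^-$, this gives $|\nabla\un(\cdot,t)|_\infty<2^{n-1}\kappa b<\tfrac{9}{16}\eta 4^n<\eta 4^n$ for all $t_1\le t\le t_2$, which is (ii) (here $\eta=\theta\kappa$ and $b<\tfrac98\theta 2^n$). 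To close (i) I chain conclusion (iii) of Lemma \ref{lm43} over the boundedly many consecutive windows covering $[t_0,t_2]$: on each window the $L^\infty$-contraction property of $e^{(t-s)\Delta}$ together with (iii) bounds the increment of $|\un|_\infty$ by $\tfrac98\gamma_0\kappa 2^n b^2\,\tau 4^{-n}$, so the total growth is at most a constant times $\kappa\tau b^2 2^{-n}\le\tfrac98\kappa\tau\theta\, b$, which is $\le\tfrac18|\un(\cdot,t_0)|_\infty$ once $\theta$ is chosen small. This gives $|\un(\cdot,t)|_\infty\le\tfrac98|\un(\cdot,t_0)|_\infty$, i.e. (i), and simultaneously re-supplies the hypothesis $|\un|_\infty<b$; the standard continuation principle makes this coupled bootstrap rigorous. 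Finally I take $t_2:=t_1+\tfrac{\tau}{2}4^{-n+1}$ and note that, were $t_b\le t_2$, the bound (i) would keep $|\un|_\infty$ finite up to $t_b$, contradicting that $|\un|_\infty\to\infty$ as $t\to t_b$; hence $t_2<t_b$.

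I expect the main obstacle to be the simultaneous closure of the two interdependent bounds with compatible constants --- in particular arranging the geometric sum of the descent step lengths to land within the prescribed $\tfrac{\tau}{6}4^{-n+1}$ while the gradient reaches exactly the scale $2^{n-1}\kappa b<\eta 4^n$ --- together with the case $t_0=0$, where $\nabla\un(\cdot,t_0)$ may be infinite for data merely in $L^3\cap L^\infty$. In that case one starts the descent at a small $\varepsilon>0$ (where smoothness gives $|\nabla\un(\cdot,\varepsilon)|_\infty<\infty$) with parameter $n'(\varepsilon)\to\infty$ as $\varepsilon\to0$, and uses that both the descent time $\tfrac{\tau}{2}\sum_{j>n}4^{-j}$ and the induced growth of $|\un|_\infty$ stay bounded uniformly in $n'$ --- the latter because the gradient, though large near $t_0$, is time-integrable there thanks to the bound of order $b/\sqrt{t-t_0}$ furnished by Lemma \ref{lmheat} for the linear part plus a short bootstrap for the Duhamel term --- so that the estimates pass to the limit $\varepsilon\to0$.
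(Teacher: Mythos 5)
Your proposal is correct and follows essentially the same route as the paper's own proof: a continuation (bootstrap) argument in which Lemma \ref{lm43} is iterated across dyadic gradient scales, with the window lengths $\sim \tau 4^{-j}$ and the velocity increments $\sim \gamma_0\kappa\tau b^2 2^{-j}$ summing geometrically, and with the smallness hypothesis $|\un(\cdot,t_0)|_\infty<\theta 2^n$ used to absorb the total growth of $|\un|_\infty$ into the factor $9/8$. If anything, your descent-plus-maintenance organization (overlapping windows at the fixed scale $n$) and your discussion of the case $|\nabla \un(\cdot,t_0)|_\infty=\infty$ are more careful than the paper's write-up, which extracts $t_2$ from the last descent step alone and does not address the infinite initial gradient.
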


\begin{proof}
Let $t_0 \in [0,t_b)$ and set $t_+:=\sup\{t>0\,|\,|\un(\cdot,t_0+t)|_\infty \leq 9/8 |\un(\cdot,t_0)|_\infty\}$.
It is well known that $\un$ is smooth as long as $|\un(\cdot,t)|_\infty$ remains finite (see for instance \cite{ESS}).
Thus one has $t_0+t_+ < t_b$, where $t_b$ is the blow up time of $\un$. 
Let $n \in \mathbb Z$ be an arbitrary large integer with $9/8|\un(\cdot,t_0)|_\infty <2^n$. Let $t_\ast^{(n)} \in [t_0,t_0+t_+]$, with $t_\ast + \tau/4^n \leq t_0+t_+$, be so that
\begin{equation} \label{179i}
  M_{n-1}=1/2M_n \leq |\nabla \un (\cdot,t_\ast^{(n)})|_\infty \leq M_n,  
\end{equation} 
where for any $n \in \mathbb Z$ we set $M_n:=\kappa 2^{n}|\un(\cdot,t_0)|_\infty $, and $\kappa,\tau$ are the constants in Lemma \ref{lm43}. Lemma \ref{lm43}, applied at $t_\ast^{(n)}$, $n$ and $b=9/8|\un(\cdot,t_0)|_\infty $, then ensures the existence of $t_\ast^{(n-1)} \in [t_0,t_0+t_+]$, with
$\tau/(2\cdot 4^n) >t_\ast^{(n-1)}-t_\ast^{(n)}>0$, and 
\begin{equation} \label{179ii}
 M_{n-2} \leq |\nabla \un (\cdot,t_\ast^{(n-1)})|_\infty \leq M_{n-1},
\end{equation}
and so that
\begin{equation} \label{179iib}
|\nabla \un (\cdot,t)|_\infty \leq M_{n-1} \text{ for any } t_{\ast}^{(n-1)}\leq t \leq t_{\ast}^{(n)}+1/(\rho 4^n)<t_0+t_+.
\end{equation}
Moreover, Duhamel's principle and item (iii) in Lemma \ref{lm43} ensure that
there is a constant $\gamma_3>0$ so that for any $t \in [t_\ast^{(n)},t_\ast^{(n-1)}]$
\begin{equation*}
|\un(\cdot,t)-\un(\cdot,t_\ast^{(n)})|_\infty \leq \frac{81}{64}\gamma_0 \kappa 2^n |\un(\cdot,t_0)|_\infty^2 (t-t_\ast^{(n)})\leq \frac{\gamma_3}{2^n}|\un(\cdot,t_0)|_\infty^2,
\end{equation*}
where $\gamma_0$ is the constant in Lemma \ref{lm43}.
We have thus established the following property: let $\check{n} \in \mathbb Z$ be so that $ M_{\check{n}-1} \leq |\nabla \un (\cdot,t_0)|_\infty \leq M_{\check{n}}$ and $2^{\check{n}} >9/8|\un(\cdot,t_0)|_\infty$. Then for any $n_0 \in \mathbb Z$ sufficiently large so that
$\gamma_3 |\un(\cdot,t_0)|_\infty^2 \sum_{n \geq n_0}1/2^n \leq 1/8 |\un(\cdot,t_0)|_\infty
$ and $2^{n_0} >9/8|\un(\cdot,t_0)|_\infty$, i.e.
\begin{equation} \label{sufflarge}
\frac{9}{8}|\un(\cdot, t_0)|_\infty \leq \text{max}\left(\frac{1}{16 \gamma_3}2^{n_0}, 1\right),
\end{equation}
one has 
\begin{equation*}
t_\ast^{(n_0)}:=\sum_{\check{n} > n\geq n_0} (t_\ast^{(n)}- t_\ast^{(n+1)}) \in \left[t_0, t_0+\frac{1}{2}\sum_{n \geq n_0}\tau/4^{n+1}\right],
\end{equation*}
i.e., $t_\ast^{(n_0)} \in [t_0,t_0+\frac{\tau}{6}4^{-n_0}] $, and for any  $\displaystyle t_0 \leq t \leq t_\ast^{(n_0)}$
\begin{equation} \label{1792011end}
|\un(\cdot,t)|_\infty \leq |\un(\cdot,t_0)|_\infty+  \gamma_3 |\un(\cdot,t_0)|_\infty^2 \sum_{n \geq n_0}1/2^n \leq \frac{9}{8}|\un(\cdot,t_0)|_\infty, 
\end{equation}
i.e., $[t_0,t_\ast^{(n_0)}]\subseteq[t_0,t_0+t_+]$.
Moreover, for $n_0$ as above on has by (\ref{179i}) and (\ref{179ii})
\begin{equation*} 
 M_{n_0-1} \leq |\nabla \un (\cdot,t_\ast^{(n_0)})|_\infty \leq M_{n_0}=\kappa 2^{n_0}|\un(\cdot,t_0)|_\infty,
\end{equation*}
and by (\ref{179iib})  for any $ t_{\ast}^{(n_0)}\leq t \leq t^{(n_0)}_{\ast\ast}:=t_{\ast}^{(n_0+1)}+\tau/4^{n_0+1} <t_0+t_+$
\begin{equation} \label{1792011l}
|\nabla \un (\cdot,t))|_\infty \leq M_{n_0} \quad (=\kappa 2^{n_0}|\un(\cdot,t_0)|_\infty).
\end{equation}
Finally note that $0 < t_{\ast}^{(n_0)} \leq \frac{\tau}{6}4^{-n_0}$ and $0< t^{(n_0-1)}-t^{(n_0)}<\tau/(2 \cdot 4^{n_0})$. Then  $t_\ast^{(n_0)} +\tau/(2 \cdot 4^{n_0+1}) \leq t^{(n_0)}_{\ast\ast}\leq t_+$.
Let $ n \in \mathbb Z$ be so that $|\un(\cdot,t_0)|_\infty \leq \theta 2^n$, with $\theta:=\frac{8}{9}\text{max}\left(\frac{1}{16 \gamma_3},1  \right)$. Then we have proved that
for any $t_0 \leq t \leq t_2:=t_{\ast\ast}^{n_0}$ one has $|\un(\cdot,t)|_\infty <2^n$, and for any $t_1 \leq t \leq t_2$ with $t_1:=t^{(n_0)}_\ast$, $|\nabla \un (\cdot,t)|_\infty <\eta 4^n$, with $\eta:=\kappa \theta$.
\end{proof}

\begin{proof}[Proof of Proposition \ref{thm3}]
Let $\tau >0$ be the constant in Lemma \ref{lm43}.
Let $ \eta$ and $\theta$ be the constants in Lemma \ref{cadregeneral} and
let $\un$ be the solution of (\ref{ns}) generated by an initial datum $\un^0$ as in the statement of Proposition \ref{thm3}. Recall that $\un(\cdot,t)$ remains smooth in $\x$ as long as $|\un(\cdot,t)|_\infty $ remains finite. Let $n_0 \in \mathbb Z_{\geq 0}$ be so that $|\un^0|_\infty < \theta 2^{n_0}$. Then by Lemma \ref{cadregeneral}, there exist $T_+^{(n_0)}(0),T_{++}^{(n_0)}(0)$ with $0 < T_+^{(n_0)}(0) \leq \frac{\tau}{6}4^{-n_0+1}$ and $ T_+^{(n_0)}(0) + \frac{\tau}{2}4^{-n_0+1}\leq T_{++}^{(n_0)}(0)$, so that one has $|\un(\cdot,t)|_\infty \leq \frac{9}{8}\theta 2^{n_0}  $ for any $0\leq t \leq T_{++}^{(n_0)}(0)$, whereas for any $t \in [T_+^{(n_0)}(0),T_{++}^{(n_0)}(0)]$ one has $|\nabla \un(\cdot, t)|_\infty < \eta 4^{n_0} $, i.e.
\begin{equation} \label{septbound}
|\nabla \un(\cdot, t)|_\infty \leq \beta \text{max}_{0 \leq s \leq t}|\un(\cdot, s)|_\infty^2 \quad \text{with} \quad \beta:= 4 \eta/\theta^2.
\end{equation}
Moreover, arguing as above we have the following property: for any $t_\ast >T_+^{(n_0)}(0)$ with $\text{max}_{0 \leq s \leq t_\ast}|\un(\cdot, s)|_\infty < \theta 2^{n_0}$ and $|\nabla \un(\cdot, t_\ast)|_\infty < \eta 4^{n_0}$ one 
has (\ref{septbound}) for any $t \in [t_\ast+T_+^{(n_0)}(0)(t_\ast),T_{++}^{(n_0)}(0)]$,  and again one has
$0 < T_+^{(n_0)}(t_\ast) \leq \frac{\tau}{6}4^{-n_0+1}$ and $ T_+^{(n_0)}(t_\ast) + \frac{\tau}{2}4^{-n_0+1}\leq T_{++}^{(n_0)}(t_\ast)$.
Moreover one has
\begin{equation*}
|\nabla \un(\cdot, t)|_\infty < 5\eta 4^{n_0}
\end{equation*}
for any $t \in [t_\ast, t_\ast+ T_+^{(n_0)}(t_\ast)]$.
To obtain the latter bound we use $\eta=\theta \kappa$ where $\kappa$ is the constant in Lemma \ref{lm43},
to conclude $|\nabla \un(\cdot,t_\ast)|_\infty < 2^{n_0+2} \kappa |\un(\cdot,t_\ast)|_\infty$, and then apply Duhamel's principle together with Lemma \ref{lm43} (i) with $n=n_0+2$ to obtain $|\nabla \un)\cdot,t)|_\infty < (2^{n_0+2} + 2^{n_0})\kappa |\un(\cdot,t_\ast)|_\infty $ for any
$t_\ast \leq t \leq \tau 4^{-n_0-2}$.

Then 
\begin{equation} \label{septbound2}
|\nabla \un(\cdot, t)|_\infty \leq \beta_3 \text{max}_{0 \leq s \leq t}|\un(\cdot, s)|_\infty^2 \quad \text{with} \quad \beta_3:= 5 \beta=20 \eta/\theta^2.
\end{equation}
for any $t \in [t_\ast, t_\ast+T_+^{(n_0)}(t_\ast)]$, i.e. we have established the conclusion of Theorem \ref{thm3} for any 
$t^{(n_0)}_+ \leq t < T^{(n_0)}:=\sup \{t \in \mathbb R\,|\, \text{max}_{0 \leq s \leq t_\ast}|\un(\cdot, s)|_\infty^2 < \theta 2^{n_0}  \}$.
Then, if $T^{(n_0)}=+\infty$, the proof is complete. Otherwise argue as above to prove (\ref{septbound2}) (with the same $\beta_3$) for any $t \in [T^{(n_0)},T^{(n_0+1)}]$, $t \in [T^{(n_0+1)},T^{(n_0+2)}]$,.... If $\un$ does not blow up, then it suffice to iterate the argument finitely many steps, otherwise an infinite number of steps. 
\end{proof}

\begin{proof}[Proof of Corollary \ref{intro}]
Assume that $\un$ is a solution of (\ref{ns}) generated by an initial datum $\un^0$  as in Proposition \ref{thm3}.
Let $t_\ast  \in [t_0,t_b)$, where $t_b\equiv t_b (\un^0) \in \mathbb R_{>0}\cup \{+\infty\}$ is the blow up time, and $t_0 \equiv t_0(\un^0)$ is the time given by Proposition \ref{thm3}. By Proposition \ref{thm3} one has $|\un(\cdot,t)|_\infty \leq \frac{9}{8}|\un^0|_\infty$ for any
$0 \leq t \leq t_0$. Assume $t_b< \infty$ and let $M > \frac{9}{8}|\un^0|_\infty$ be arbitrarily fixed. Recall that $|\un(\cdot,t)|_\infty \rightarrow +\infty$ when $t \rightarrow t_b$. Thus 
there exists $t_\ast:=\text{min}\{t>0\,|\, |\un(\cdot, t)|_\infty =2M\} \in [t_0, t_b)$, and by Proposition \ref{thm3}
\begin{equation} \label{distance}
|\nabla \un(\cdot, t_\ast|_ \infty \leq \beta_2|\un(\cdot,t_\ast)|\infty^2=4\beta_2M^2.
\end{equation}
Let $\p \in \mathbb R^3$ be so that $|\un(\p, t_\ast)|=2M$ and consider 
\begin{equation*}
r:= \text{dist}\left(\p,\{\x \in \mathbb R^3\,|\,|\un(\x,t_\ast|< M  \}\right).
\end{equation*}
Then by (\ref{distance}) there exists a constant $\tilde \beta>0$ so that $r > \tilde \beta/M$, and the kinetic energy
$E$ supported on $\{\x \in \mathbb R^3\,|\,|\un(\x,t_\ast| \geq M \}$ satisfies
$ \displaystyle E \geq 4\pi \tilde \beta^3/(3M)$.
By the energy inequality one has $E \leq |\un^0|_2^2$. Then necessarily $|\un^0|_2^2 \geq \frac{4\pi \tilde \beta^3}{3M}$, i.e., 
\begin{equation} \label{endcontradiction}
M > \frac{4\pi \tilde \beta^3}{3|\un^0|_2^2 }. 
\end{equation}
Assume that the condition 
$\displaystyle |\un|_\infty \leq 4\pi \tilde \beta^3/(3|\un^0|_2^2)$
is satisfied and fix $M \in \{\frac{9}{8}|\un^0|_\infty,\frac{4\pi \tilde \beta^3}{3|\un^0|_2^2 }\}$. Then (\ref{endcontradiction}) shows that the time $t_\ast:=\text{min}\{t>0\,|\, |\un(\cdot, t)|_\infty =2M\}>t_0$ cannot exist, i.e., $|\un(\cdot, t)|_\infty < 2M$ for any $t \geq 0$, thus $\un$ remains smooth for all time. 
\end{proof}

\noindent Philipp Lohrmann\\
Dipartimento di Matematica e Applicazioni  \textquotedblleft R. Cacciopoli" Università Federico II, Napoli\\
Via Cintia, Monte S. Angelo,
I-80126 Napoli, Italy \\
E-mail address: \underline{philipp.lohrmann@unina.it}


\begin{thebibliography}{ccccc}


\bibitem{ESS} L. Escauriaza, G, Seregin, V. Sverak {\it $L_{3,\infty}$-Solutions to the Navier-Stokes Equations and Backward Uniqueness}
Russ. Math. Surv., 58, 2003, pages 211-250.
\bibitem{GG} I. Gallagher, T. Gallay {\it Equations de la mécanique des fluides} http://www-fourier.ujf-grenoble.fr/IMG/pdf/Note-Gallagher.pdf.
\bibitem{Kato} H. Fujita, T. Kato {\it On the Navier-Stokes initial value problem I}, Archive for Rational Mechanics and Analysis, 16, 1964, 269-315. 
\bibitem{HuLi} T.Y. Hou, Z. Li {\it On Stabilizing Effect of Convection in Three-Dimensional Incompressible Flows} Communications on Pure and Applied Mathematics, Vol. LXI, 2008.
\bibitem{Leray} J. Leray {\it Essai sur le mouvement d'un liquide visqueux emplissant l'espace}, Acta Mathematica, 63, 1933, 193-248.



\end{thebibliography}
\end{document}